\documentclass[journal]{IEEEtran}
% \documentclass[journal, onecolumn]{IEEEtran}  
%%%%%%%%%%%%%%%%%%%%%%%%%%%%%%%%%%%%%
\usepackage{cite}
\usepackage{url}
\usepackage{amsmath}
% \interdisplaylinepenalty=2500
\allowdisplaybreaks
\usepackage{amssymb}
\usepackage{enumitem}
\usepackage{bm}
\usepackage{algorithm,algorithmic}
\usepackage{mathtools}
\mathtoolsset{showonlyrefs}
\usepackage{subcaption}
\captionsetup[subfigure]{labelformat=simple}

\usepackage{graphicx}
\usepackage{refcount}
\usepackage[hidelinks]{hyperref}
\usepackage{color}\definecolor{RED}{rgb}{1,0,0}\definecolor{BLUE}{rgb}{0,0,1}
%%%%%%%%%%%%%%%%%%%%%%%%%%%%%%%%%%%%%
\DeclareMathOperator*{\minimize}{minimize}

\DeclareMathOperator*\argmin{\text{argmin}}

\def\diag{\text{diag}}

\newcommand{\mathletter}[1]{%
	\expandafter\newcommand\csname b#1\endcsname{\mathbb #1}
	\expandafter\newcommand\csname c#1\endcsname{\mathcal #1}
	\expandafter\newcommand\csname f#1\endcsname{\mathfrak #1}
	\expandafter\newcommand\csname til#1\endcsname{\widetilde #1}
	\expandafter\newcommand\csname ha#1\endcsname{\widehat #1}
	\expandafter\newcommand\csname bf#1\endcsname{\bf #1}
	\expandafter\newcommand\csname s#1\endcsname{\mathsf #1}
}%
\def\mathletters#1{\mathlettersB #1,,}
\def\mathlettersB#1,{\ifx,#1,\else\mathletter #1\expandafter\mathlettersB\fi}
\mathletters{A,B,C,D,E,F,G,H,I,J,K,L,M,N,O,P,Q,R,S,T,U,V,W,X,Y,Z}

\newcommand{\mathletterl}[1]{%
	\expandafter\providecommand\csname v#1\endcsname{\vec{#1}}
}%
\def\mathlettersl#1{\mathlettersC #1,,}
\def\mathlettersC#1,{\ifx,#1,\else\mathletterl #1\expandafter\mathlettersC\fi}
\mathlettersl{a,b,c,d,e,f,g,h,i,j,k,l,m,n,o,p,q,r,s,t,u,v,w,x,y,z}

\def \qed {\hfill \vrule height6pt width 6pt depth 0pt}
\def\bea{\begin{equation}\begin{alignedat}{-1}}
\def\ena{\end{alignedat}\end{equation}}
\def\bee{\begin{equation}}
\def\ene{\end{equation}}

\renewcommand{\vec}[1]{\mathbf{#1}}

\newtheorem{theo}{Theorem}
\newtheorem{lemma}{Lemma}
\newtheorem{assum}{Assumption}

\newtheorem{prop}{Proposition}

\newenvironment{proof}{\begin{IEEEproof}}{\end{IEEEproof}}

\def\T{\mathsf{T}}

\def\bone{{\mathbf{1}}}
\def\bzero{{\mathbf{0}}}

%%%%%%%%%%%%%%%%%%%%%%%%%%%%%%%%

\begin{document}

\title{Fully Asynchronous Distributed Optimization with Linear Convergence in Directed Networks }

\author{Jiaqi~Zhang, Keyou~You% <-this % stops a space
% \thanks{* This work was  supported by the National Natural Science Foundation of China under Grant 61722308, and National Science and Technology Innovation 2030 Major Project (No. 2018AAA0101604) of the Ministry of Science and Technology of China. ({\em Corresponding author: Keyou You}).}% <-this % stops a space
\thanks{J. Zhang and K. You are with the Department of Automation, and BNRist, Tsinghua University, Beijing 100084, China. E-mail: zjq16@mails.tsinghua.edu.cn, youky@tsinghua.edu.cn.}
}
   
\maketitle

\IEEEpeerreviewmaketitle

\begin{abstract}%   <- trailing '%' for backward compatibility of .sty file
We consider the distributed optimization problem, the goal of which is to minimize the sum of local objective functions over a directed network. Though it has been widely studied recently, most of the existing algorithms are designed for synchronized or randomly activated implementation, which may create deadlocks in practice. In sharp contrast, we propose a \emph{fully} asynchronous push-pull gradient algorithm (APPG) where each node updates without waiting for any other node by using (possibly stale) information from neighbors. Thus, it is both deadlock-free and robust to any bounded communication delay. Moreover, we construct two novel augmented networks to theoretically evaluate its performance from the worst-case point of view and show that if local functions have Lipschitz-continuous gradients and their sum satisfies the Polyak-\L ojasiewicz condition (convexity is not required), each node of APPG converges to the same optimal solution at a linear rate of $\cO(\lambda^k)$, where $\lambda\in(0,1)$ and the virtual counter $k$ increases by one no matter which node updates. This largely elucidates its linear speedup efficiency and shows its advantage over the synchronous version.  Finally, the performance of APPG is numerically validated via a logistic regression problem on the \emph{Covertype} dataset.  
\end{abstract}

\begin{IEEEkeywords}
  Fully asynchronous, distributed optimization, linear convergence, Polyak-\L ojasiewicz condition
\end{IEEEkeywords}

\section{Introduction}\label{sec1}

As data get larger and more spatially distributed, the distributed optimization over a network of computing nodes (aka. agents or workers) has found numerous applications in multi-agent problems \cite{nedic2009distributed,lin2016distributed,zhang2017distributed} and machine learning \cite{lian2018asynchronous,assran2018stochastic,zhang2019decentralized}. It aims to minimize the sum of local objective functions, i.e., 
\bee\label{original}
\minimize_{\vec x\in\bR^m}\ f(\vec x):=\sum_{i=1}^n f_i(\vec x)
\ene
where $n$ is the number of nodes and the local objective function $f_i$ is only known by node $i$.  Nodes are expected to solve \eqref{original} by only communicating with  neighbors that are defined by the network, see Fig. \ref{fig_graph}. In the empirical risk minimization problem \cite{zhang2019decentralized,tang2018d}, $f_i$ often takes the form
$
f_i(\vx):=\sum_{\xi\in\cD_i} F_i(\vx;\xi)
$
where $\cD_i$ is a local dataset of node $i$, $\vx$ is the model parameter to be optimized, and $F_i(\vx;\xi)$ is the loss of a single sample $\xi$. 

For large-scale optimization problems, it is crucial to design an easily implementable algorithm that is robust to heterogeneous nodes and communication delays. Many existing works focus on synchronous algorithms where all nodes essentially start to compute each iteration simultaneously (c.f. Fig. \ref{fig_sync}) by using a global synchronization scheme that is often not amenable to the distributed setting. 
\begin{figure}[!t]
	\centering
	\begin{subfigure}[c]{0.45\linewidth}
		\centering
        \includegraphics[height=0.51\linewidth]{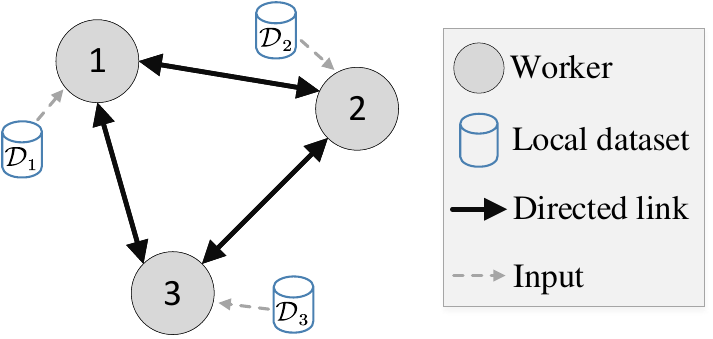}
        \caption{Undirected network}\label{fig_undirected}
	\end{subfigure}
	\hspace{-16pt}
	\begin{subfigure}[c]{0.45\linewidth}
		\centering
        \includegraphics[height=0.51\linewidth]{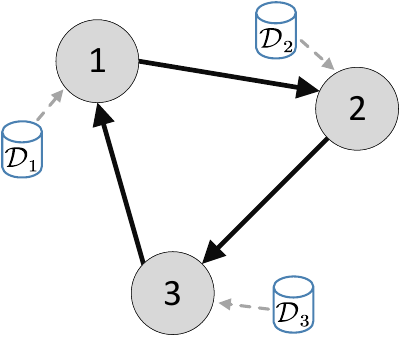}
        \caption{Directed network}\label{fig_directed}
	\end{subfigure}
	\caption{Undirected and directed peer-to-peer networks.}
	\label{fig_graph}
\end{figure}

Asynchronous updates are demonstrated to  perform better than synchronous counterparts \cite{wu2018decentralized,nedic2010asynchronous,bianchi2015coordinate,notarnicola2016asynchronous,xu2018convergence,lian2018asynchronous,zhang2018asyspa,assran2020advances}.  
 A popular one is gossip-based \cite{nedic2010asynchronous,bianchi2015coordinate,notarnicola2016asynchronous,xu2018convergence,lian2018asynchronous}  where a pair of neighbors is randomly selected to concurrently update via information exchange, see Fig. \ref{fig_gossip}. However, this (a) may create deadlocks in practice \cite{tsianos2012consensus,lian2018asynchronous}, especially for networks with many cycles,  (b) is vulnerable to communication delays, and (c) cannot work on directed networks.

To address these issues simultaneously, this work considers the \emph{fully asynchronous} setting (c.f. Fig \ref{fig_async}, \cite{zhang2018asyspa,tian2020achieving,assran2018asynchronous})  over \emph{directed} networks and proposes an asynchronous push-pull gradient  (APPG) algorithm to distributedly solve \eqref{original}. In APPG, a node starts to update without waiting for other nodes by only using locally accessed (possibly stale) information. It does not need any network synchronization, and can tolerate uneven update frequencies  and communication delays among nodes. 

To theoretically evaluate its performance, we develop an augmented network approach and use the machinery of linear matrix inequalities (LMIs) to capture some key quantities via a novel $\lambda$-sequence. If all local functions $f_i$ have Lipschitz continuous gradients and the global objective function $f$ satisfies the Polyak-\L ojasiewicz (PL) condition (no convexity requirement), we prove  from the worst-case point of view that APPG converges linearly to an optimal solution at a rate $\cO(\lambda^k)$ where  $\lambda\in(0,1)$ depends on the asynchrony level and delay bounds, and the virtual counter $k$ increases by one no matter which node updates. Note that the convergence guarantee for gossip-based algorithms is generally given in the stochastic sense.

\begin{figure}[!t]
	\centering
	\begin{subfigure}[c]{\linewidth}
		\centering
        \includegraphics[width=0.8\linewidth]{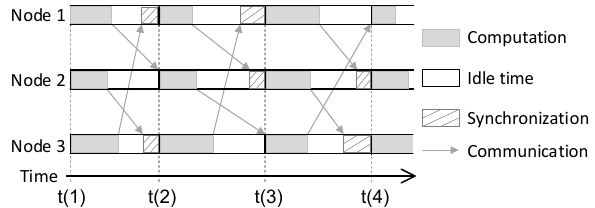}
        \caption{A synchronous algorithm over Fig. \ref{fig_directed}. All nodes start new updates simultaneously.}\label{fig_sync}
	\end{subfigure}
	\\
	\begin{subfigure}[c]{\linewidth}
		\centering
        \includegraphics[width=0.8\linewidth]{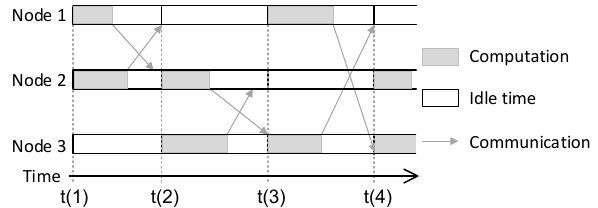}
        \caption{A gossip-based algorithm over Fig. \ref{fig_undirected}. A pair of neighboring nodes are selected to concurrently update via information exchange.}\label{fig_gossip}
	\end{subfigure}
	\\
	\begin{subfigure}[c]{\linewidth}
		\centering
        \includegraphics[width=0.8\linewidth]{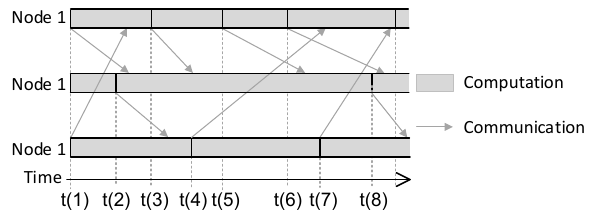}
        \caption{A fully asynchronous algorithm over Fig. \ref{fig_directed}. Every node proceeds without waiting for others.}\label{fig_async}
    \end{subfigure}
	\caption{Synchronous, gossip-based, and fully asynchronous algorithms over the network in Fig. \ref{fig_graph}.}
	\label{fig_algs}
\end{figure}

Then, we implement APPG on a multi-core server with the Message Passing Interface (MPI) to solve a multi-class logistic regression problem over the {\em Covertype} dataset. The result confirms that its empirical convergence rate in running time is faster than its synchronous counterpart, and achieves linear speedup efficiency with respect to the node number. Its robustness to slow computing nodes is also validated, which is essential in heterogeneous large-scale networks.

The rest of the paper is organized as follows. Section \ref{sec0} briefly reviews some related works. Section \ref{sec2} formulates the problem and proposes APPG. We provide the theoretical results for APPG in Section \ref{sec_4}. To prove them, we first develop a time-varying augmented network approach in Section \ref{sec4}, and then establish the LMIs in Section \ref{sec5}. In Section \ref{sec6}, we conduct numerical experiments on APPG. Some concluding remarks are drawn in Section \ref{sec7}.

\section{Related Work}\label{sec0}

The past decade has witnessed an increasing attention on distributed optimization, especially for the design of synchronous algorithms, e.g., DGD \cite{nedic2009distributed}, EXTRA \cite{shi2015extra}, DIGing \cite{nedic2017achieving,qu2017harnessing}, MSDA \cite{scaman2017optimal}, NIDS \cite{li2019decentralized}, ABC \cite{xu2020distributed}, DSGT \cite{zhang2019decentralized} and \cite{yuan2019exact,sun2019distributed,scaman2018optimal,xin2020decentralized,lu2020moniqua}. The row-stochastic matrix \cite{xi2018linear}, epigraph reformulation \cite{xie2018distributed} and the push-sum method \cite{nedic2015distributed,assran2018stochastic,scutari2019distributed} have also been proposed to solve the network unbalancedness issue. For example, Push-DIGing \cite{nedic2017achieving} combines the push-sum method and gradient tracking \cite{xu2018convergence,qu2017harnessing,nedic2017achieving} to achieve linear convergence for strongly convex and Lipschitz smooth functions. However, it involves nonlinear operators per update, which is solved in the push-pull/$\cA\cB$ algorithm \cite{pu2018push,xin2018linear,saadatniaki2018optimization}.

The recent interest has been shift to  asynchronous algorithms. The gossip-based AsynDGM \cite{xu2018convergence} assumes that each pair of neighbors is activated to update with a fixed probability, and has been extended in \cite{lian2018asynchronous} where the activated nodes admit a doubly-stochastic mixing matrix. Ref. \cite{nedic2010asynchronous} considers that each node is activated by a Poisson process. Though they can be applied to directed networks, the coordination between neighbors is indispensable.

Inspired by the seminal works \cite{tsitsiklis1986distributed,li1987asymptotic}, the fully asynchronous setting has emerged as a more scalable and easier implementable alternative. An initial attempt is made in \cite{assran2018asynchronous} to extend the synchronous gradient-push algorithm \cite{nedic2015distributed} to this setting. However, they cannot achieve exact convergence to an optimal solution if nodes have different update frequencies. To address it, Ref. \cite{zhang2018asyspa} proposes a novel adaptive mechanism to dynamically adjust stepsize, which is also adopted in \cite{spiridonoff2020robust} to analyze stochastic algorithms. Nonetheless, the algorithms in \cite{zhang2018asyspa,spiridonoff2020robust} only have sublinear convergence rates even if the objective function is strongly convex and Lipschitz smooth.  Asy-SONATA \cite{tian2020achieving} exploits the perturbed push-sum method with gradient tracking (\cite{xu2018convergence,qu2017harnessing,nedic2017achieving}) in the fully asynchronous setting, and converges (a) linearly for strongly convex and Lipschitz smooth functions, and (b) sublinearly for non-convex problems. The striking differences between \cite{tian2020achieving} and our work include: (a) APPG further converges linearly even under the PL condition, which holds for some important non-convex problems, e.g., the policy optimization for LQR \cite{fazel2018global}. (b) APPG  uses uncoordinated constant stepsizes while Asy-SONATA can only use uncoordinated diminishing stepsizes, which however leads to a sublinear convergence rate. (c) APPG seems easier to understand and implement.

\textbf{Notation}: Throughout this paper, we use the following notation and definitions:
\begin{itemize}[leftmargin=*]
	\item $a,\vec{a},A$, and $\cA$ generally denote respectively a scalar, column vector, matrix, and set. $(\cdot)^\T$ denotes the transpose.
	\item $\bR^n$ and $\bN$ denote the set of $n$-dimensional real numbers and natural numbers, respectively.
	\item $\|\cdot\|_2$ denotes the $l_2$-norm of a vector or matrix. $\|\cdot\|_\sF$ denotes the matrix Frobenius norm. $\cO(\cdot)$ denotes the big-O notation.
	\item $\bone_n$ and $\bzero_n$ denote respectively the $n$-dimensional vector with all ones and all zeros, where the subscript is omitted if the dimension is clear from the context.
	\item $\nabla f(x)$ denotes the gradient of a function $f$ at $x$.
	\item $\va$ is called a stochastic vector if it is nonnegative and $\va^\T\bone=1$. $A$ is called a row-stochastic matrix if $A$ is nonnegative and $A\bone=\bone$. $A$ is column-stochastic if $A^\sT$ is row-stochastic. $A$ is doubly-stochastic if $A$ is both row- and column-stochastic.
	\item $[A]_{ij}$ denotes the element in row $i$ and column $j$ of $A$.
	\item $|\cA|$ denotes the cardinality  of $\cA$.
	\item $\lfloor x\rfloor$ denotes the largest integer less than or equal to $x$.
	\item $[\va_1,\va_2,\dots,\va_n]$ and $[\va_1;\va_2;\dots;\va_n]$ denote the horizontal stack and vertical stack of  $\va_1,\va_2,\dots,\va_n$, respectively.
	\item $\text{Proj}_{\cX}(\vx)$ denotes the projection of $\vx$ onto the set $\cX$.  
\end{itemize}

\section{The APPG}\label{sec2}
\subsection{Problem formulation}\label{sec1a}

We aim to solve \eqref{original} over a directed network. A directed network  (digraph) is denoted by $\cG=(\cV,\cE)$, where $\cV=\{1,2,\cdots,n\}$ is the set of nodes and $\cE\subseteq\cV\times\cV$ is the set of edges. The directed edge $(i,j)\in\cE$ if node  $i$ can directly send information to node $j$. Let $\cN_\text{in}^i=\{j|(j,i)\in\cE\}\cup\{i\}$ denote the set of in-neighbors of node $i$ and  $\cN_\text{out}^i=\{j|(i,j)\in\cE\}\cup\{i\}$ denote the set of out-neighbors of $i$. A path from node $i$ to node $j$ is a sequence of consecutively directed edges from node $i$ to node $j$. Then, $\cG$ is {\em strongly connected} if there exists a directed path between any pair of nodes. For a strongly connected graph $\cG$, the distance between two nodes is the minimum number of edges to connect them via a directed path, and the largest distance $d_g$ is also called its diameter.

For a distributed algorithm over $\cG$, each node $i$ has a local state vector $\vx_i$ and iteratively update it via directed communications with neighbors, the objective of which is to ensure all local states $\vx_i,i\in\cV$ converge to an optimal solution of \eqref{original}. In this work, we make the following assumptions.

\begin{assum}\label{assum}
	\begin{enumerate}
		\renewcommand{\labelenumi}{\rm(\alph{enumi})}
		\item The digraph $\cG$ is strongly connected.
		\item All local functions $f_i$ are $\beta$-Lipschitz smooth, i.e., there exists a  $\beta>0$ such that
		\bee
		\|\nabla f_i(\vx)-\nabla f_i(\vy)\|_2\leq \beta\|\vx-\vy\|_2,\ \forall i\in\cV,\vx,\vy\in\bR^m.
		\ene
		\item The global objective function $f$ has at least one minimizer and satisfies the Polyak-\L ojasiewicz  condition \cite{karimi2016linear} with parameter $\alpha>0$, i.e., $\cX^\star=\argmin_{\vx\in\bR^m} f(\vx)\neq\emptyset$, and
		\begin{equation}\label{pl}
			2\alpha(f(\vx)-f^\star)\leq\|\nabla f(\vx)\|_2^2,\ \forall \vx\in\bR^{m},
			\end{equation}
			where $f^\star=\min_{\vx\in\bR^m} f(\vx)$.
	\end{enumerate}
\end{assum}

Assumptions \ref{assum}(a)-(b) are standard in the distributed smooth optimization over directed networks \cite{nedic2017achieving,pu2018push}. The PL condition in Assumption \ref{assum}(c) is satisfied in some important \emph{non-convex} problems such as the policy optimization for LQR \cite{fazel2018global}. It is strictly weaker than the strongly convex condition that is commonly used to derive the linear convergence of gradient-based methods \cite{nedic2017achieving,tian2020achieving}. Particularly, the strong convexity implies the uniqueness of the minimizer, which is clearly not the case for the PL condition.

\subsection{The APPG}\label{sec3}

The details of APPG are given in Algorithm \ref{alg_APPG}, where we do not introduce any iteration index to emphasize the fact of fully asynchronous implementation.

From the  view of a single node, we illustrate the easy implementation of APPG. Our novel idea lies in the use of local buffers in each node. Particularly, a node just keeps receiving messages from its in-neighbors and storing them to its local buffers until it is activated to compute a new update. Clearly, each buffer may contain zero, one or multiple receptions from the same in-neighbor, which is unavoidable in the fully asynchronous setting.  Another striking feature of APPG is that the node computes a new update by using all messages in the buffers (c.f. \eqref{eq_APPG}), instead of only using the latest reception. This enables APPG to be robust to bounded communication delays, out of sequence issues and there is no need to use any iteration index, which is required by Asy-SONATA \cite{tian2020achieving}.  Interestingly, the local buffers can also be waived since the average and summation operators in \eqref{eq_APPG} can be recursively computed.  Next, the node broadcasts the updated vectors to its out-neighbors, after which empties its buffers. Such a process is repeated until a local stopping criterion (e.g., $\|\vy_i\|_2<\epsilon$) is satisfied.

Different from synchronous or gossip-based algorithms, APPG does not require any global clock or coordination among nodes, and each node does not wait for others for new updates. For example, a node can simply start to compute a new update once it completes the current one. Thus, there is no deadlock problem in APPG.  Moreover,  the local stepsize $\gamma_i$ can be different among nodes.

\begin{algorithm}[t!]
	\makeatletter
	\renewcommand\footnoterule{%
		\kern-3\p@
		\hrule\@width.4\columnwidth
		\kern2.6\p@}
	\makeatother
	\caption{The APPG --- from the view of node $i$}\label{alg_APPG}
	\begin{minipage}{\linewidth}
	\renewcommand{\thempfootnote}{\arabic{mpfootnote}}
	\begin{itemize}[leftmargin=*]
		\item{\bf Initialization:} Each node $i$ selects local stepsize $\gamma_i$, initializes $\vx_i$ as an arbitrary real vector in $\bR^m$, computes $\vg_i=\vec y_i=\nabla f_i(\vx_i)$, and creates local buffers $\cX_i$ and $\cY_i$. Then it broadcasts $\widetilde\vx_i:=\vx_i$ and $\widetilde\vy_i:=\vec y_i/|\cN_\text{out}^i|$ to its out-neighbors.
		\item{{\bf Repeat}}
		\begin{enumerate}[leftmargin=*]
			\renewcommand{\labelenumi}{\theenumi:}
			\item Keep receiving $\widetilde{\vx}_j$ and $\widetilde\vy_j$ from in-neighbors of node $i$ and copy to $\cX_i$ and $\cY_i$ respectively\footnote{The set of in-neighbors includes $i$ itself, i.e., $\widetilde{\vx}_i$ and $\widetilde\vy_i$ are copied to $\cX_i$ and $\cY_i$.}, until node $i$ is activated to update.
			\item Update $\vx_i$ and $\vy_i$ as
			\bea\label{eq_APPG}
			{\vx}_i &\leftarrow \textbf{avg}(\cX_i)\\
			%			s_i^- &\leftarrow s_i,\ s_i \leftarrow \text{avg}(\cS_\text{rec}) + 1\\
			\vg_i^- &\leftarrow \vg_i,\ \vg_i \leftarrow \nabla f_i({\vx}_i)\\
			\vy_i &\leftarrow \textbf{sum}(\cY_i)+\vg_i- \vg_i^-\\
			\widetilde{\vx}_i &\leftarrow {\vx}_i - \gamma_i \vy_i\\
			\ena
			where $\textbf{avg}(\cX_i)$ returns the average\footnote{The average can be weighted, e.g., one may assign higher weights to more recent messages to potentially improve the convergence rate in practice. } of vectors in $\cX_i$, $\textbf{sum}(\cY_i)$ takes the sum of vectors in $\cY_i$.  $\widetilde{\vx}_i$, $\vg_i^-$ and $\vg_i$ are three auxiliary vectors.
			\item Broadcast $\widetilde{\vx}_i$ and $\widetilde\vy_i:=\vy_i/|\cN_\text{out}^i|$ to all out-neighbors of $i$, after which \emph{empty} both $\cX_i$ and $\cY_i$.
		\end{enumerate}
		\item{{\bf Until} a stopping criterion is satisfied. e.g., node $i$ stops if $\|\vy_i\|_2<\epsilon$ for some predefined $\epsilon>0$.}
		\item{{\bf Return} $\vx_i$.}
	\end{itemize}
\end{minipage}
\end{algorithm} 

\subsection{The idea of APPG}\label{sec3.3}

This subsection aims to intuitively explain the linear convergence of the APPG. To this end, we first show how APPG works if nodes are forced to update synchronously. In this case, we can use a global iteration index $k$ to record the update progress.

Let $X(k)$, $Y(k)$ and $\nabla \vf(X(k))$ be the stacked local states and gradients at the $k$-th iteration, i.e.,
\begin{align}
X(k)&=[\vx_1(k),\vx_2(k),\cdots,\vx_n(k)]^\T\in\bR^{n\times m}\\
Y(k)&= [\vy_1(k),\vy_2(k),\cdots,\vy_n(k)]^\T\in\bR^{n\times m}\\
\nabla \vf(X(k))&=[\nabla f_1(\vx_1(k)),\cdots,\nabla f_n(\vx_n(k))]^\T\in\bR^{n\times m}
\end{align}
and $\Gamma=\text{diag}(\gamma_1,\cdots,\gamma_n)$. Then, we obtain that
\begin{subequations}
	\label{ppg}
	\noeqref{ppga,ppgb}
	\begin{align}
	X(k+1) &= A(X(k) - \Gamma Y(k))\label{ppga}\\
	Y(k+1) &= BY(k) + \nabla \vf(X(k+1)) - \nabla \vf(X(k))~~~~~\label{ppgb}
	\end{align}
\end{subequations}
where the row-stochastic matrix $A$ and column-stochastic matrix $B$ result from  \textbf{avg}($\cdot$)  and  \textbf{sum}($\cdot$)  in \eqref{eq_APPG}, respectively. Clearly, \eqref{ppg} reduces to the algorithm in \cite{pu2018push,xin2018linear,saadatniaki2018optimization}, which has been proved to converge linearly for strongly convex and Lipschitz smooth functions. 

The key to the linear convergence of \eqref{ppg} is the introduction of $\vy_i$ to distributedly track the gradient of $f$. Then, we left multiply \eqref{ppgb} with $\bone^\T$, use the column-stochasticity of $B$ and notice $Y(0)=\nabla \vf(X(0))$. This implies that
\bee\label{eq_y}
\bone^\T Y(k)= \bone^\T\nabla \vf(X(k)).
\ene
Now suppose that $X(k)$ and $Y(k)$ have already converged to $X^\infty$ and $Y^\infty$, respectively. It follows from \eqref{ppgb} that $Y^\infty=BY^\infty$. Jointly with \eqref{eq_y}, we obtain that
\begin{equation}\label{eq_yy}
\vy_i^\infty=\pi_i^\sB(\bone^\T\nabla\vf(X^\infty))^\T
\end{equation}
where $\pi^\sB$ is the Perron vector of $B$, i.e. $B\pi^\sB=\pi^\sB$. Moreover, the row-stochasticity of $A$ implies that $X^\infty=\bone (\vx^\infty)^\T$ and
\bee\label{eq_2}
(\bone^\T\nabla\vf(X^\infty))^\sT=(\bone^\T\nabla\vf(\bone (\vx^\infty)^\T))^\sT=\nabla f(\vx^\infty).
\ene
That is, $\vy_i^\infty=\pi_i^\sB \nabla f(\vx^\infty)$. Substituting $X^\infty$ and $Y^\infty$ into \eqref{ppga} implies
$$
X(k+1) = A(X^\infty-\Gamma \pi^\sB \nabla f(\vx^\infty)^\T).
$$
Let $\pi^\sA$ be Perron vector of $A^\T$. We left multiply \eqref{ppga} with $(\pi^\sA)^\T$ and notice that $X^\infty=\bone (\vx^\infty)^\T$. Then,
\bee\label{eq_gd}
\vx(k+1) = \vx^\infty - (\pi^\sA)^\T\Gamma \pi^\sB \nabla f(\vx^\infty)=\vx^\infty -\rho \nabla f(\vx^\infty)
\ene
where $\rho= (\pi^\sA)^\T\Gamma \pi^\sB$. Clearly, \eqref{eq_gd} is a gradient descent update, which converges linearly under Assumption \ref{assum}. It also shows that the limiting point $\vx^\infty$ must be an optimal point $\vx^\star$ and $\vy_i$ converges to 
$\vy_i^\infty=\pi_i^\sB\nabla f(\vx^\infty)=\pi_i^\sB\nabla f(\vx^\star)=0.$ Moreover, the smaller the $\vy_i$, the closer $\vx_i$ to an optimal solution. Therefore, $\vy_i$ can serve as a stopping criterion in Algorithm \ref{alg_APPG}.

In the fully asynchronous setting, $\vy_i$ plays a similar role. However, a theoretical understanding of APPG is much more complicated. The information delays make the key relation \eqref{eq_y} invalid and the uncoordinated updates degrade the tracking performance of $\vy_i$. In essence, APPG is a multi-timescale decision-making problem. To resolve it, we develop an augmented network approach to prove its linear convergence by associating each node with some virtual nodes, under which the asynchronous updates and communications of nodes are transformed into synchronous operations over the augmented network. Moreover, the key technique in  \cite{pu2018push,xin2018linear,saadatniaki2018optimization} cannot be applied since the transformed system is time-varying and lacks their important properties (e.g. the irreducibility of the weighting matrix). To this end, we further design an absolute probability sequence and a $\lambda$-sequence  for the proof.

\section{Linear Convergence of APPG}\label{sec_4}

Two assumptions on the asynchronism and communication delays are needed for the convergence of APPG.

\begin{assum}[Bounded activation time interval]\label{assum3}Let $t_i$ and $t_i^+$ be any two consecutive activation time of node $i$. There exist two positive constants $\underline{\tau}$ and $\bar{\tau}$ such that $0<\underline{\tau}\leq |t_i^+-t_i| \leq \bar{\tau}<\infty$ for all $i\in\cV$.
\end{assum}

Assumption \ref{assum3} is easily satisfied and desirable in practice. In fact, both the lower and upper bounds exist naturally since computing update consumes time and can be finished in finite time. If violated, e.g., some node is broken, then the information from this node can no longer be accessed, and hence it is impossible to find an optimal solution of \eqref{original}.

\begin{assum}[Bounded transmission delays]\label{assum6}For any $(i,j)\in\cE$, the transmission delay from node $i$ to node $j$ is bounded by a constant $\tau>0$.
\end{assum}

 Note that transmission delays can be time-varying, and the parameters $\underline{\tau}$, $\bar{\tau}$ and $\tau$ are not needed for implementing APPG. 

Let $\cT=\{t(k)\}_{k\ge 1}$ be an increasing sequence of updating time of all nodes, i.e., $t\in\cT$ if some node starts to update at time $t$. Denote the state of node $i$ just before time $t(k)$ by $\vec x_i(k)$ and $\vec y_i(k)$. Our main theoretical result is given below.
\begin{theo}\label{theo1}
	Suppose that Assumptions \ref{assum}-\ref{assum6} hold. Let $\bar\gamma=\max_i\gamma_i$ and $\underline\gamma=\min_i\gamma_i$. If $\bar\gamma$ is sufficiently small and $\lambda\in \left(\max\big\{{1-\frac{1-\theta}{2\widetilde{t}}},{1-\frac{\underline\gamma\alpha\theta n}{8b}}\big\},1\right)$, then there exists an optimal solution $\vx^\star\in \cX^\star$ such that
	\bea\label{eq_con}
	\|\vx_i(k)-\vx^\star\|_2&=\cO(\lambda^k),\ 
	\|\vy_i(k)\|_2=\cO(\lambda^k), \forall i\in\cV
	\ena
	where $\alpha$ is given in Assumption \ref{assum}, $b=n(\bar{\tau}+\tau)/\underline{\tau}$. The positive constants $\theta\in(0,1)$ and $\widetilde t$ are given in Lemmas \ref{lemma2} and \ref{lemma3} of Section \ref{sec_4b}, respectively.
\end{theo}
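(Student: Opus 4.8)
The plan is to reduce the fully asynchronous dynamics to a synchronous-looking linear system over an \emph{augmented network} and then to close a small-gain / Lyapunov argument via the promised LMIs. First I would fix the virtual counter $k$ indexed by $\cT=\{t(k)\}$ and, following the augmented-network construction of Section~\ref{sec4}, associate to each physical node a bounded number of virtual nodes that encode the in-flight (delayed) messages sitting in the buffers $\cX_i,\cY_i$; because of Assumptions~\ref{assum3}--\ref{assum6} the number of such virtual nodes is uniformly bounded (by something on the order of $b=n(\bar\tau+\tau)/\underline\tau$), so the augmented state has fixed finite dimension. In these coordinates APPG becomes a \emph{time-varying} version of \eqref{ppg}, say $\widetilde X(k+1)=\widetilde A(k)(\widetilde X(k)-\widetilde\Gamma\widetilde Y(k))$ and $\widetilde Y(k+1)=\widetilde B(k)\widetilde Y(k)+\nabla\vf(\widetilde X(k+1))-\nabla\vf(\widetilde X(k))$ with $\widetilde A(k)$ row-stochastic and $\widetilde B(k)$ column-stochastic, each drawn from a finite family and each satisfying a uniform (over $k$) product-ergodicity estimate because the underlying digraph is strongly connected and every node updates within every window of length $\bar\tau$.

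Next I would introduce the two decompositions that make the idea of Section~\ref{sec3.3} rigorous in the time-varying case: an \emph{absolute probability sequence} $\{\phi(k)\}$ for the products of $\widetilde A(k)$ (playing the role of $\pi^\sA$) and the Perron-type sequence for $\widetilde B(k)$ (playing the role of $\pi^\sB$), whose existence and uniform positivity are guaranteed under Assumptions~\ref{assum}(a),~\ref{assum3},~\ref{assum6}. Using these I would track four scalar error quantities along $k$: (i) the consensus error $\|\widetilde X(k)-\bone(\vx(k))^\T\|$, (ii) the gradient-tracking error $\|\widetilde Y(k)-\phi\text{-weighted sum of gradients}\|$, (iii) the optimality gap $f(\vx(k))-f^\star$ (controlled via the PL inequality \eqref{pl}), and (iv) a lumped ``delay tail'' term measuring the discrepancy between buffered and current values, which is where $\widetilde t$ (the mixing/averaging horizon from Lemma~\ref{lemma3}) and $\theta$ (the contraction modulus from Lemma~\ref{lemma2}) enter. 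Lipschitz smoothness (Assumption~\ref{assum}(b)) lets me bound $\|\nabla\vf(\widetilde X(k+1))-\nabla\vf(\widetilde X(k))\|$ by $\beta$ times the displacement, which in turn is bounded by a linear combination of (i), (ii) and $\bar\gamma\|\widetilde Y\|$.

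The heart of the proof is then to assemble these bounds into a single vector inequality $\vec v(k+1)\preceq M(\bar\gamma,\lambda)\,\vec v(k)$ (entrywise, after peeling off $\lambda^k$) where $\vec v(k)$ collects the four quantities above and $M$ is a $4\times4$ nonnegative matrix whose entries are explicit in $\alpha,\beta,b,\theta,\widetilde t,\underline\gamma,\bar\gamma$; showing $\rho(M)<1$ for the stated range of $\lambda$ and all sufficiently small $\bar\gamma$ is exactly the LMI feasibility established in Section~\ref{sec5}. Concretely, one wants the diagonal consensus/tracking blocks to contract at rate roughly $1-\tfrac{1-\theta}{2\widetilde t}$ (hence the first term in the $\max$ defining $\lambda$) and the optimality-gap block, after the PL-driven gradient-descent step \eqref{eq_gd}, to contract at rate roughly $1-\tfrac{\underline\gamma\alpha\theta n}{8b}$ (the second term), with the off-diagonal couplings made negligible by taking $\bar\gamma$ small; a standard Schur-complement argument converts "$\rho(M)<1$ for small $\bar\gamma$" into a genuine LMI in the scaled variables. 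Once $\vec v(k)=\cO(\lambda^k)$, the bound $\|\vx_i(k)-\vx(k)\|=\cO(\lambda^k)$ plus $\vx(k)\to\vx^\star$ (summability of the increments) gives $\|\vx_i(k)-\vx^\star\|_2=\cO(\lambda^k)$, and $\|\vy_i(k)\|_2=\cO(\lambda^k)$ follows since $\vy_i$ tracks $\pi_i^\sB\nabla f(\vx(k))\to 0$ at the same rate.

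The main obstacle I anticipate is precisely step three: controlling the \emph{coupling} between the delay/consensus errors and the optimality gap without losing linearity. In the synchronous case \eqref{eq_gd} is clean, but here the "effective stepsize" $(\pi^\sA)^\T\Gamma\pi^\sB$ becomes a slowly time-varying, delay-perturbed quantity, so one must show that over a window of length $\widetilde t$ the accumulated perturbation is $\cO(\bar\gamma^2)$ relative to the $\cO(\bar\gamma)$ descent — i.e. a separation-of-timescales estimate — and then verify that the resulting $M(\bar\gamma,\lambda)$ is simultaneously Schur for \emph{some} common $\lambda$ in the claimed interval. Getting the constants $b$, $\theta$, $\widetilde t$ to line up so that the two lower bounds in the $\max$ are the binding ones (rather than some spurious third constraint) is the delicate bookkeeping that Sections~\ref{sec4}--\ref{sec5} are designed to handle.
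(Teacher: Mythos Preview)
Your proposal is essentially the paper's approach: augmented network to absorb delays into a finite-dimensional time-varying system, absolute probability sequence for the row-stochastic products, four coupled error quantities, and a $4\times4$ nonnegative-matrix small-gain inequality whose spectral radius is driven below $1$ by taking $\bar\gamma$ small. Two minor corrections worth flagging: the fourth quantity is not a separate ``delay tail'' (once you pass to the augmented state all in-flight messages are already encoded, so no residual delay term survives) but rather $\|\widetilde Y(k)\|_\sF$, which is what feeds back into the consensus bound through the step $-\Gamma I_k^a\widetilde Y(k)$ and is itself bounded by the other three; and the paper does not set up a one-step recursion $\vec v(k{+}1)\preceq M\vec v(k)$ but instead uses the $\lambda$-sequence device $p^{\lambda,k}:=\sup_{t\le k}p(t)/\lambda^t$ to convert the window-$\widetilde t$ and window-$b$ contraction estimates directly into a fixed-point inequality $\ve\preceq M\ve+c$, with $\rho(M)<1$ established via the diagonal similarity $T=\diag(1,1,\sqrt{\bar\gamma},\sqrt{\bar\gamma})$ (making $TMT^{-1}$ nearly strictly lower triangular) rather than a Schur complement.
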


Theorem \ref{theo1} shows that the local decision vector of each node in APPG converges to the same optimal solution at a linear rate. To obtain a deterministic convergence rate, we have to adopt the \emph{worst-case} point of view under the setting that (a) The underlying network is heavily unbalanced, i.e., the differences between the numbers of in-neighbors and out-neighbors of a node is large. (b) Some nodes compute much faster than the others (match the lower and upper bounds of Assumption \ref{assum3}, respectively). (c) Communication delays are based on the upper bound in Assumption \ref{assum6}. Thus, the theoretical rate is expected to be very conservative though the practical performance is empirically much better. 

Roughly speaking, $b$ measures the degree of the asynchronicity and delays in the network. Under Assumptions \ref{assum3} and \ref{assum6}, information sent from a node can be received and used by another node in $b$ steps. $\theta$ and $\widetilde{t}$ reflect the rate of achieving consensus. 

The striking feature is that the virtual counter $k$ in \eqref{eq_con} increases by one no matter which node updates, and hence more nodes generally lead to faster increase of $k$. To some extent, this suggests a linear speedup efficiency \cite{lian2018asynchronous} of APPG, which is confirmed via experiments in Section \ref{sec6}.

\section{The Time-varying Augmented System}\label{sec4}
This section develops our time-varying augmented network approach for the convergence analysis. We first provide a lemma on Assumptions \ref{assum3} and \ref{assum6}.

\begin{lemma}\label{lemma1}
	The following statements hold.
	\begin{enumerate}[label=(\alph*)]
		\item Under Assumption \ref{assum3}, let $b_1=(n-1)\lfloor\bar{\tau}/\underline{\tau}\rfloor+1$. Each node is activated at least once within the time interval $(t(k),t(k+b_1)]$.
		
		\item Under Assumptions \ref{assum3} and \ref{assum6}, let $b_2=n\lfloor\tau/\underline{\tau}\rfloor$ and $b=b_1+b_2$. The information sent from node $i$ at time $t(k)$ can be received by node $j$ before time $t(k+b_2)$ and used to compute an update before time $t(k+b)$ for any $k$ and $(i,j)\in\cE$.
	\end{enumerate}
\end{lemma}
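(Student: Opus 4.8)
The plan is to derive both parts from elementary counting over the global update sequence $\cT=\{t(k)\}$. Assumption~\ref{assum3} controls two things at once: the lower bound $\underline{\tau}$ caps how many times any node can be activated inside a window of fixed length, while the upper bound $\bar{\tau}$ forces every node to be activated within time $\bar{\tau}$ of any given instant. Part~(b) additionally uses the delay bound $\tau$ from Assumption~\ref{assum6}. Both parts then reduce to bookkeeping.

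For part~(a) I would argue by contradiction. Fix $k$ and suppose some node $j$ is \emph{not} activated in $(t(k),t(k+b_1)]$. Let $s\le t(k)$ be the last activation of $j$ at or before $t(k)$ (or its initialization instant), and let $u$ be its next activation; the hypothesis gives $u>t(k+b_1)$, so the $b_1$ updates $t(k+1),\dots,t(k+b_1)$ all lie strictly inside $(s,u)$, an interval of length $u-s\le\bar{\tau}$. Since $j$ contributes none of them, these $b_1$ updates come from the remaining $n-1$ nodes, and by the pigeonhole principle some node $i\ne j$ is activated at least $\lceil b_1/(n-1)\rceil$ times within a window of length at most $\bar{\tau}$. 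On the other hand, since consecutive activations of $i$ are at least $\underline{\tau}$ apart, $i$ can be activated at most $\lfloor\bar{\tau}/\underline{\tau}\rfloor+1$ times in such a window. With $b_1$ as in the statement the pigeonhole lower bound exceeds this cap, which is the desired contradiction; hence every node is activated in $(t(k),t(k+b_1)]$.

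For part~(b), recall that in APPG node $i$ broadcasts $\widetilde\vx_i$ and $\widetilde\vy_i$ to its out-neighbors immediately after performing the update \eqref{eq_APPG} at time $t(k)$. By Assumption~\ref{assum6} these messages reach node $j$ no later than $t(k)+\tau$. A count of the same type --- now using only the lower bound $\underline{\tau}$ --- shows that at most $b_2=n\lfloor\tau/\underline{\tau}\rfloor$ updates occur in any window of length $\tau$, so the message reaches $j$ before time $t(k+b_2)$ and hence sits in $j$'s buffers $\cX_j,\cY_j$ before the update at $t(k+b_2)$. Finally, applying part~(a) with index $k+b_2$, node $j$ is activated at least once in $(t(k+b_2),t(k+b_2+b_1)]=(t(k+b_2),t(k+b)]$; at that activation it uses \emph{all} vectors currently in $\cX_j$ and $\cY_j$ (the \textbf{avg} and \textbf{sum} steps of \eqref{eq_APPG}), including the one received from $i$, so this message is used to compute an update before time $t(k+b)$. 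If $j$ happens to update between reception and $t(k+b_2)$, it uses the message even earlier, which only strengthens the conclusion.

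The lemma is essentially elementary, and the step needing care is the per-window activation count in part~(a): one must be careful whether the relevant interval is open or half-open and keep track of the resulting floor terms, and must rule out the \emph{missing} node $j$ rather than an arbitrary node. The only algorithm-specific subtlety is that a received message must not be discarded before it is used; this is exactly what the buffer rule of APPG guarantees, since a node empties $\cX_i$ and $\cY_i$ only \emph{after} broadcasting a completed update.
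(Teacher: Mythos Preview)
Your approach is essentially the paper's: both parts are reduced to counting how many global updates can fall inside a time window of length at most $\bar\tau$ (respectively $\tau$) using the lower bound $\underline\tau$ on inter-activation spacing, and part~(b) is finished by invoking part~(a) at the shifted index. The buffer observation you add is correct and is exactly the algorithmic point needed.

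One arithmetic slip in part~(a): with the cap you state, namely $\lfloor\bar\tau/\underline\tau\rfloor+1$ activations of a single node in the open window $(s,u)$, and the pigeonhole lower bound $\lceil b_1/(n-1)\rceil=\lfloor\bar\tau/\underline\tau\rfloor+1$, the two numbers are \emph{equal}, so the contradiction as written does not close. The paper handles this by asserting the sharper per-node cap $\lfloor\bar\tau/\underline\tau\rfloor$ in the half-open interval $(t(p),t(q)]$ and then summing over the $n-1$ remaining nodes to get $q-p\le(n-1)\lfloor\bar\tau/\underline\tau\rfloor<b_1$ directly, rather than going through pigeonhole. Aside from this bookkeeping, the two arguments are identical in structure.
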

\begin{proof}
	(a) Suppose that node $i$ is not activated during the time interval $(t(p),t(q)],p,q\in\bN$ but is activated at $t(q+1)$. It follows from  Assumption \ref{assum3} that $t(q)-t(p)\leq\bar{\tau}$. Moreover, any other node can be activated at most $\lfloor(t(q)-t(p))/\underline{\tau}\rfloor\leq\lfloor\bar{\tau}/\underline{\tau}\rfloor$ times during the time interval $(t(p),t(q)]$, which implies $q-p\leq (n-1)\lfloor\bar{\tau}/\underline{\tau}\rfloor$. Hence, the first part of the result follows.
	
	(b) Suppose that node $i$ sends information at time $t(p),p\in\bN$ and node $j$ receives it in the time interval $(t(q),t(q+1)],q\in\bN$. It follows from Assumption \ref{assum6} that $t(q)-t(p)\leq\tau$. Moreover,   Assumption \ref{assum3} implies that any node can be activated at most $\lfloor{\tau}/\underline{\tau}\rfloor$ times during the time interval $[t(p),t(q)]$, i.e., $q-p+1\leq n\lfloor {\tau}/\underline{\tau}\ \rfloor$, and hence $q+1\leq p+n\lfloor{\tau}/\underline{\tau}\rfloor$. The result follows by letting $p=k$. Jointly with Lemma \ref{lemma1}(a), the rest of proof follows immediately.
\end{proof}

\subsection{Construction of the augmented digraph}

Let $\cT_i\subseteq\cT$ be the sequence of activation time of node $i$, i.e., $t\in\cT_i$ if node $i$ computes an update at time $t$. Then, it is clear that
\bea\label{reforiter}
&[\vec x_i(k+1), \vec  y_i(k+1), \vec g_i(k+1), \vec  g_i^-(k+1)]\\
&=[\vec x_i(k), \vec  y_i(k), \vec g_i(k), \vec  g_i^-(k)],\ \forall t(k)\notin \cT_i.
\ena

To handle bounded time-varying transmission delays and asynchronicity, we design an augmented system. Firstly, we associate each node $i$ with two types of virtual nodes, and each type has $b$ virtual nodes,  where $b$ is given in Lemma \ref{lemma1}(b). We denote the above two types of virtual nodes by $\{v_{x,i}^{(1)},\cdots,v_{x,i}^{(b)}\}$ and $\{v_{y,i}^{(1)},\cdots,v_{y,i}^{(b)}\}$, respectively. We call the first type virtual nodes $x$-type nodes, which is to deal with the staleness of the state $\widetilde\vx_i,i\in\cV$. The second type with subscript $y$ is called $y$-type nodes, which is to handle the staleness of $\widetilde\vy_i,i\in\cV$. Then, we construct an augmented digraph $\widetilde \cG(k)=(\widetilde{\cV},\widetilde \cE(k))$ to represent the communication topology of all these nodes at time $t(k)$, where $\widetilde{\cV}$ contains $n(2b+1)$ nodes, including  $n$ nodes of $\cG$ and $2nb$ virtual nodes. 

The edge set $\widetilde \cE(k)$  is described as follows. We first note that there is no edge between any $x$-type node and any $y$-type node. For the $x$-type virtual nodes, the edges $(i,v_{x,i}^{(1)}), (v_{x,i}^{(1)},v_{x,i}^{(2)}),\cdots,$ $(v_{x,i}^{(b-2)},v_{x,i}^{(b-1)})$ and $(v_{x,i}^{(b-1)},v_{x,i}^{(b)})$ always include for all $k\in\bN$ and $i\in\cV$ (c.f. Fig. \ref{fig10}). If $(i,j)\in\cE$ in $\cG$ and node $j$ receives $\widetilde\vx_i$ at time $t(k)$, then some of the edges $(v_{x,i}^{(1)},j)$, $(v_{x,i}^{(2)},j),\ldots, (v_{x,i}^{(b)},j)$ and $(i,j)$ are included in $\cE(k)$ (c.f. Fig. \ref{fig20a}), depending on the transmission delay of the received message. If node $j$ receives $\widetilde\vx_i(t-u)$ and $\widetilde\vx_i(t-v)$ at time $t(k)$ for some $u,v> 1$, then $(v_{x,i}^{(u-1)},j),(v_{x,i}^{(v-1)},j)\in\widetilde \cE(k)$. If $u=1$, which means that there is no communication delay, then $(i,j)\in\widetilde{\cE}(k)$.  Fig. \ref{fig20a} illustrates such an augmented graph\footnote{The idea of adding virtual nodes to address asynchronicity or delays was firstly adopted in \cite{nedic2010convergence} to study consensus problems and in \cite{zhang2018asyspa,assran2018asynchronous,tian2020achieving} for distributed optimization. Nevertheless, Refs. \cite{nedic2010convergence,assran2018asynchronous,zhang2018asyspa} use only $x$-type virtual nodes, and hence involve division operators. We further introduce $y$-type nodes to accommodate linear update rules. Ref. \cite{tian2020achieving} associates virtual nodes to original edges rather than original nodes.}.

The topology of the $y$-type virtual nodes is similarly developed with reversed edge  directions (c.f. Fig. \ref{fig30a}), which is the main motivation of using two types of virtual nodes. Firstly, edges $(v_{y,j}^{(1)},j)$, $(v_{y,j}^{(2)},v_{y,j}^{(1)})$,..., and $(v_{y,j}^{(nb)},v_{y,j}^{(nb-1)})$ are always included in $\widetilde{\cE}(k)$ with the reversed directions of $x$-type nodes, see Fig. \ref{fig10}. Secondly, if $(i,j)\in\cE$ in $\cG$ and $k\in\cT_i$, then only one edge in edges $(i,v_j^{(1)})$, $(i,v_j^{(2)}),\ldots, (i,v_j^{(nb)})$ and $(i,j)$ is included in $\cE(k)$, which also depends on the transmission delay of $\widetilde\vy_i$ sent from node $i$ to node $j$. At time $t(k)$, suppose that node $i$ sends $\vy_i(k)$ to node $j$, which is received at $t(k+u)$ for $u>1$, then $(i,v_j^{(u-1)})\in\widetilde{\cE}(k)$ and the delay is $u$. Similarly, if there is no communication delay, i.e., $u=1$, then $(i,j)\in\widetilde{\cE}(k)$.  Fig. \ref{fig30a} illustrates such an augmented graph.

\begin{figure}[!t]
	\centering
	\includegraphics[width=0.7\linewidth]{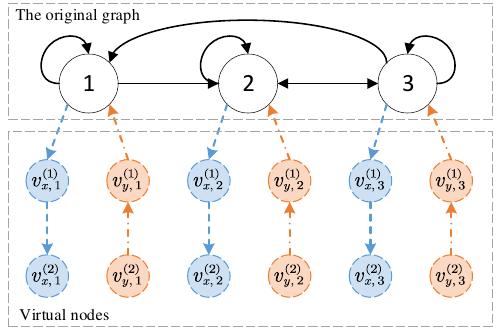}
	\caption{An augmented graph with virtual nodes to address delays of the original graph.}
	\label{fig10}
\end{figure}

\begin{figure}[!t]
	\centering
	\begin{subfigure}[c]{0.5\linewidth}
		\centering
        \includegraphics[width=\linewidth]{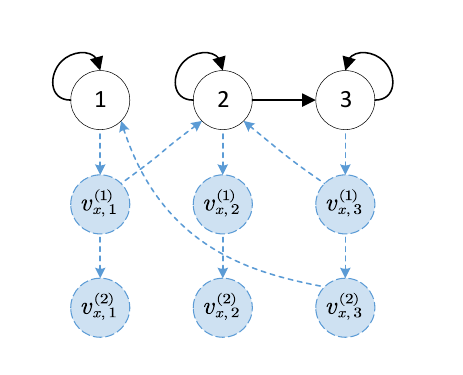}
        \caption{}\label{fig20a}
	\end{subfigure}
	\begin{subfigure}[c]{0.41\linewidth}
		\centering
        \includegraphics[width=\linewidth]{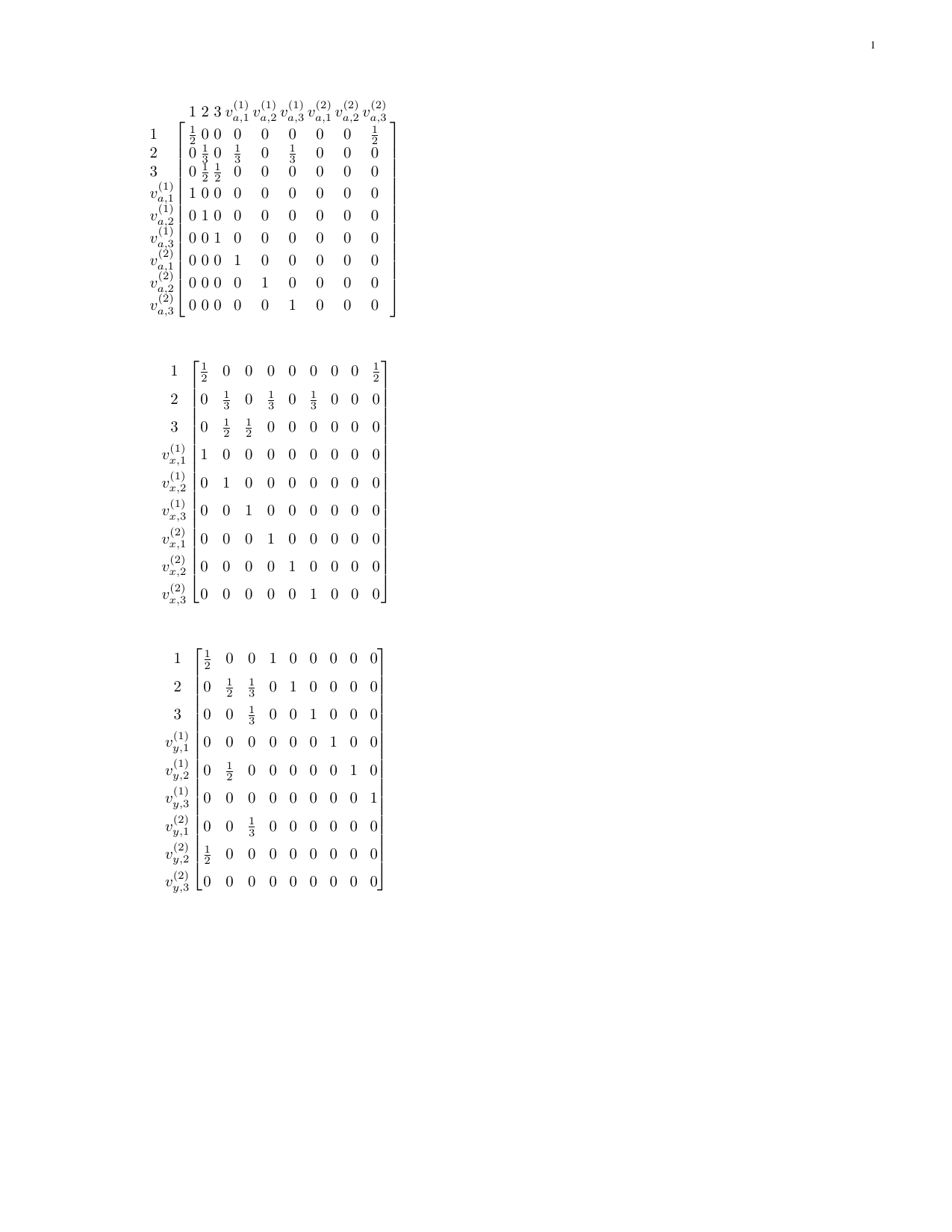}
        \caption{}\label{fig20b}
	\end{subfigure}
	\caption{(a) The topology of the $x$-type virtual nodes in the augmented graph at some time $t(k)$. Here node 1 uses the 2-steps delayed information $\widetilde\vx_3(k-3)$ and the latest information $\widetilde\vx_1(k-1)$ to compute $\vx_1(k)$, and hence $(v_{x,3}^{(2)},1)\in\widetilde\cE(k)$. Node 2 uses $\widetilde\vx_2(k-1)$ and the 1-step delayed information $\widetilde\vx_1(k-2)$ and $\widetilde\vx_3(k-2)$ to compute $\vx_2(k)$. Node 3 use the latest information $\widetilde\vx_2(k-1)$ and $\widetilde\vx_3(k-1)$ to compute $\vx_3(k)$. (b) The corresponding row-stochastic matrix $\widetilde{A}(k)$ in \eqref{eq1_sec2}.}
	\label{fig20}
\end{figure}

\begin{figure}[!t]
	\centering
	\begin{subfigure}[c]{0.5\linewidth}
		\centering
        \includegraphics[width=\linewidth]{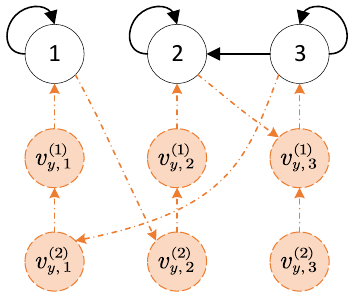}
        \caption{}\label{fig30a}
	\end{subfigure}
	\begin{subfigure}[c]{0.41\linewidth}
		\centering
        \includegraphics[width=\linewidth]{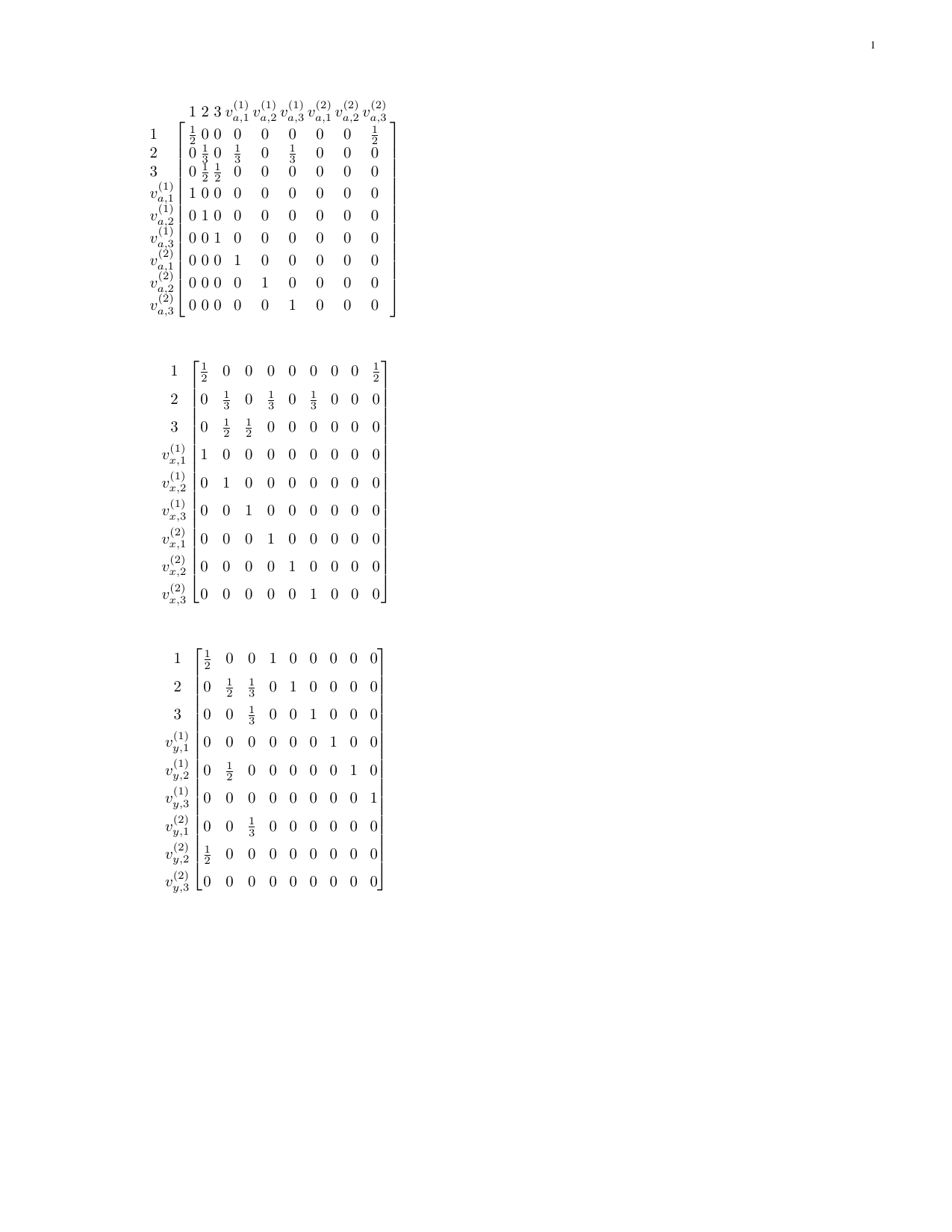}
        \caption{}\label{fig30b}
	\end{subfigure}
	\caption{(a) The topology of the $y$-type virtual nodes in the augmented graph at some time $t(k)$, which represents that node 1 sends $\widetilde\vy_1(k)$ to node 2 and node 2 use it at $t(k+3)$ to compute $\vy_2(k+3)$. Other edges can be interpreted similarly. (b) The corresponding column-stochastic matrix $\widetilde{B}(k)$ in \eqref{eq1_sec2}.}
	\label{fig30}
\end{figure}

We provide a simple example to visualize  the augmented graph approach. Consider that node $i$ sends $\widetilde\vx_i(k)$ and $\widetilde\vy_i(k)$ to node $j$ at time $t(k)$, and node $j$ receives it at time $t(k+2)$, i.e., the delay is $1$. In the augmented graph, this can be viewed as node $i$ directly sends $\widetilde\vx_i(k)$ to the virtual node $v_{x,i}^{(1)}$, and sends $\widetilde\vy_i(k)$ to the virtual node $v_{y,j}^{(1)}$ at time $t(k)$. Nodes $v_{x,i}^{(1)}$ and $v_{y,j}^{(1)}$ respectively receive $\widetilde\vx_i(k)$ and $\widetilde\vy_i(k)$ at time $t(k+1)$, and immediately send them to node $j$ at time $t(k+1)$. Finally, node $j$ receives $\widetilde\vx_i(k)$ and $\widetilde\vy_i(k)$ at time $t(k+2)$. Clearly, all non-virtual nodes in $\widetilde \cG$ receive the same information as that in $\cG$ and hence their updates appear to be synchronous and delay-free.

Under the time-varying  augmented digraph, we are able to rewrite the APPG in a compact form.

\subsection{A compact form of the APPG over the augmented digraph}\label{sec_4b}

Let $\vx_i^{(u)}(k)$ and $\vy_i^{(u)}(k)$ denote the states of virtual node $v_{x,i}^{(u)}$ and $v_{y,i}^{(u)}$ just after time $t(k)$, and $\widetilde n=n(2b+1)$. Then, APPG can be rewritten in a compact form over $\widetilde \cG(k)$,
\bea\label{eq1_sec2}
\widetilde{X}(k+1)&=\widetilde A(k)(\widetilde{X}(k)-\Gamma I_k^a\widetilde Y(k)),\\
\widetilde{Y}(k+1)& = \widetilde B(k)\widetilde{Y}(k)+ I_k^a(\nabla(k+1)-\nabla(k))\\
& = \widetilde B(k)\widetilde{Y}(k)+ \nabla(k+1)-\nabla(k)
\ena
where
\bea\label{eq3_sec2}
\widetilde{X}(k)&=[X(k); X^{(1)}(k);\cdots; X^{(b)}(k)]\in\bR^{\widetilde n\times m}\\  X^{(u)}(k)&=[\vx_1^{(u)}(k),\cdots,\vx_n^{(u)}(k)]^\T\\
\widetilde{Y}(k)&=[Y(k); Y^{(1)}(k);\cdots; Y^{(b)}(k)]\in\bR^{\widetilde n\times m}\\  Y^{(u)}(k)&=[\vy_1^{(u)}(k),\cdots,\vy_n^{(u)}(k)]^\T\\
\nabla(k)&=[\nabla \vf(X(k));\bzero_{(\widetilde{n}-n)\times m}],
\ena
the initial condition is
$
\widetilde{X}(0)=[X(0);\bzero_{(\widetilde{n}-n)\times m}],  \widetilde{Y}(0)=\nabla(0)
$
, and $\widetilde A(k),\widetilde B(k),I_k^a\in\bR^{\widetilde n\times \widetilde n}$ are \bea
&[\widetilde A(k)]_{ij}=\\
&\left\{\begin{array}{ll}
	\frac{1}{|\cX_i(k)|}, &\begin{tabular}{@{}l@{}}\text{if $i,v\in\cV$, $j=nu+v$, $t(k+1)\in\cT_i$,}\\ \text{and node $i$ receives $\vx_v(k-u)$ at $t(k+1)$}\end{tabular}\\
	1,                            & \text{if $i\in\cV$, $t(k+1)\notin\cT_i$ and $j=i$}                                                                     \\
	1,                            & \text{if $i\notin\cV$ and $j=i-n$}                                                                                     \\
	0,                            & \text{otherwise,}
\end{array}\right.\\
\ena
\bea
&[\widetilde B(k)]_{ji}=\\
&\left\{\begin{array}{ll}
	\frac{1}{|\cN_\text{out}^i|}, & \begin{tabular}{@{}l@{}}\text{if $i,v\in\cV$, $j=nu+v$, $t(k+1)\in\cT_i$,}\\ \text{and node $v$ receives $\vx_i(k)$ at $t(k+u)$}\end{tabular} \\
	1,                            & \text{if $i\in\cV$, $t(k+1)\notin\cT_i$ and $j=i$}                                                                        \\
	1,                            & \text{if $i\notin\cV$ and $j=i-n$}                                                                                        \\
	0,                            & \text{otherwise,}
\end{array}\right.\\
\ena
\bea
&[I_k^a]_{ij}=\left\{\begin{array}{ll}
	1, & \text{if $i=j$, $i\in\cV$, and $t(k+1)\in\cT_i$} \\
	%		1, & \text{if $i=j$, $i\notin\cV$} \\
	0, & \text{otherwise,}
\end{array}\right.
\ena
where $|\cX_i(k)|$ is the number of elements in the buffer $\cX_i$ at time $t(k+1)$.

An example of $\tilA(k)$ and $\tilB(k)$ is illustrated in Fig. \ref{fig20b} and Fig. \ref{fig30b}, respectively. $I_k^a$ is a diagonal matrix with its $i$-th diagonal element be 1 if node $i$ is activated at time $t(k+1)$. The third equality in \eqref{eq1_sec2} follows from $\nabla(k+1)=\nabla(k)$ for any $i\in\{i|[I_k^a]_{ii}=0\}$. An important fact is that $\widetilde A(k)$ is a row-stochastic matrix and $\widetilde B(k)$ is a column-stochastic matrix by the use of two types virtual nodes. Moreover,
\bee\label{eq4_sec2}
\bone_{\widetilde n}^\T\tilY(k)=\bone_{\widetilde n}^\T\nabla(k)=\bone_{n}^\T\nabla\vf(X(k))
\ene
which is obtained by left multiplying the second equality of \eqref{eq1_sec2} with $\bone_{\widetilde n}^\T$.

Note that \eqref{eq1_sec2} generates the same sequence of the states $\vx_i$ and $\vy_i$ as that of APPG. Hence, it is sufficient to study the convergence of $\tilX(k)$ and $\tilY(k)$ in \eqref{eq1_sec2}. To this end, we define
\bea\label{eq2_sec2}
\Phi_t^\sA (k)&=\widetilde A(k+t-1)\widetilde A(k+t-2)\cdots\widetilde A(k+1)\widetilde A(k)\\
\Phi_t^\sB (k)&=\widetilde B(k+t-1)\widetilde B(k+t-2)\cdots\widetilde B(k+1)\widetilde B(k).
\ena
where $k,t\in\bN$, and we adopt the convention that $\Phi_0^\sA (k)=\Phi_0^\sB (k)=I$ and $\Phi_t^\sA (k)=\Phi_t^\sB (k)=0$ for any $k\in\bN$ and $t<0$.

The following lemma states that $\Phi_t^\sA (k)$ and $\Phi_t^\sB (k)$ linearly converge to rank-one matrices.
\begin{lemma}\label{lemma2}
	Under Assumptions \ref{assum}, \ref{assum3} and \ref{assum6}, the following statements are in force.
	\begin{enumerate}[label=(\alph*)]
		\item There exist two stochastic vectors $\phi_{t}^\sA (k)$ and $\phi_{t}^\sB (k)$ such that
		\bee
		\|\Phi_t^\sA (k)- \bone\phi_{t}^\sA (k)^\T\|_\sF\leq 2\rho^{t},\ \|\Phi_t^\sB (k)- \phi_{t}^\sB (k)\bone^\T\|_\sF\leq2\rho^{t}
		\ene 
		for all $k,t\in\bN$, where
		\bee
		\rho=\left(1-\theta\right)^{\frac{1}{d_g b}}<1,\  \theta\geq\left(\frac{1}{\widetilde  n}\right)^{d_g b}\in(0,1),
		\ene
		$b$ is defined in Lemma \ref{lemma1}(b), $d_g$ is the diameter of $\cG$ and $\widetilde{n}=n(2b+1)$.
		\item $\sum_{j=1}^{n}[\Phi_t^\sB(0)]_{ij}\geq n\theta,\ \forall i\in\cV, t\in\bN.$
	\end{enumerate}
\end{lemma}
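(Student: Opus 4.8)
The plan is to treat both parts with the classical machinery for inhomogeneous products of stochastic matrices, the only real work being to extract \emph{uniform} lower bounds on the entries of the time-varying factors $\widetilde A(k),\widetilde B(k)$ and on their products over windows of length $d_gb$, the latter being the mixing time dictated by Lemma~\ref{lemma1}. First I would establish the pointwise estimate that every nonzero entry of $\widetilde A(k)$ and of $\widetilde B(k)$ is at least $1/\widetilde n$. For $\widetilde B(k)$ this is immediate from its definition: each positive entry is either $1$ (virtual-chain edges and inactive self-loops) or $1/|\cN_\text{out}^i|\ge 1/n$. For $\widetilde A(k)$ one must in addition bound the buffer size --- by Lemma~\ref{lemma1}(a) at most $b_1\le b$ steps separate two consecutive activations of node $i$, during which each in-neighbour deposits only boundedly many messages, so $|\cX_i(k)|\le\widetilde n$ and every positive entry is $\ge 1/\widetilde n$. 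I would also record that the original nodes always carry positive self-loops, $[\widetilde B(k)]_{ii}\ge 1/\widetilde n$ for $i\in\cV$ (the self-copy of $\widetilde\vy_i$ always lies in $\cY_i$), and likewise for $\widetilde A(k)$; these self-loops will be used to pad information-flow paths to any prescribed length.

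Next I would translate Assumptions~\ref{assum3}--\ref{assum6} into connectivity of the augmented digraph: by Lemma~\ref{lemma1}(b), in any window of $b$ consecutive steps the state of a node propagates to every out-neighbour in $\cG$, so composing $d_g$ windows and using strong connectivity (diameter $d_g$), in the $\widetilde A$-dynamics the state of every node reaches every original node within $d_gb$ steps, and symmetrically in the $\widetilde B$-dynamics. Combined with the padding self-loops this gives, for every $k$, a column $j^\star$ with $[\Phi_{d_gb}^\sA(k)]_{ij^\star}\ge(1/\widetilde n)^{d_gb}$ in every row, and a row $i^\star$ with $[\Phi_{d_gb}^\sB(k)]_{i^\star j}\ge(1/\widetilde n)^{d_gb}$ in every column, so with $\theta:=(1/\widetilde n)^{d_gb}$ both $\Phi_{d_gb}^\sA(k)$ and $\Phi_{d_gb}^\sB(k)^\T$ are scrambling with coefficient of ergodicity $\le 1-\theta$. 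Grouping $\Phi_t^\sA(k)$ (resp. $\Phi_t^\sB(k)$) into $q=\lfloor t/(d_gb)\rfloor$ such blocks plus a remainder and using sub-multiplicativity of the ergodicity coefficient (each block $\le 1-\theta$, the remainder $\le 1$), the pairwise $\ell_1$-distance of the rows (resp. columns) contracts like $2(1-\theta)^q\le 2\rho^t/(1-\theta)$ with $\rho=(1-\theta)^{1/(d_gb)}$. Taking $\phi_t^\sA(k),\phi_t^\sB(k)$ to be the resulting limiting directions and converting this row/column estimate into the Frobenius norm of $\Phi_t^\sA(k)-\bone\phi_t^\sA(k)^\T$ and $\Phi_t^\sB(k)-\phi_t^\sB(k)\bone^\T$, with the constants absorbed, gives the stated $\le 2\rho^t$; this proves (a).

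For (b) I would sharpen the $\widetilde B$-side of the previous step: since in the $\widetilde B$-dynamics \emph{every} original node receives, along a directed $\cG$-path of length $\le d_g$, from \emph{every} original node within $d_gb$ steps, padding with original-node self-loops shows $[\Phi_{d_gb}^\sB(k)]_{ij}\ge(1/\widetilde n)^{d_gb}=\theta$ for \emph{all} $i,j\in\cV$ and all $k\in\bN$. Writing $\vec v(t):=\Phi_t^\sB(0)\,[\bone_n;\bzero_{\widetilde n-n}]$, so that $\sum_{j=1}^n[\Phi_t^\sB(0)]_{ij}=[\vec v(t)]_i$ with $\vec v(0)=[\bone_n;\bzero_{\widetilde n-n}]$ and $\bone_{\widetilde n}^\T\vec v(t)=n$ by column-stochasticity, the case $t=d_gb$ follows by summing the $n$ inequalities $[\Phi_{d_gb}^\sB(0)]_{ij}\ge\theta$ over $j\in\cV$, and the case $0\le t<d_gb$ follows by keeping only the self-loop path at $i$, which gives $[\vec v(t)]_i\ge[\Phi_t^\sB(0)]_{ii}\ge(1/\widetilde n)^t\ge n\theta$ once one checks the elementary inequality $\widetilde n^{\,d_gb-t}\ge n$ (true since $d_gb-t\ge 1$ and $\widetilde n\ge n$). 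For $t>d_gb$ I would split $\Phi_t^\sB(0)=\Phi_{d_gb}^\sB(t-d_gb)\,\Phi_{t-d_gb}^\sB(0)$ and combine the uniformly positive $\cV\times\cV$ block of the left factor with conservation of mass and the fact (Lemma~\ref{lemma1}(b)) that mass held on a $y$-type virtual node is redeposited onto an original node within $b\le d_gb$ steps, so that the mass carried by the original block is continually replenished, and conclude $[\vec v(t)]_i\ge n\theta$ by induction over blocks of length $d_gb$.

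The step I expect to be the main obstacle is this last one. The naive composition --- keeping only the diagonal term, $[\vec v(t)]_i\ge\theta[\vec v(t-d_gb)]_i$ --- loses a factor $\theta<1$ per block and degrades to the useless bound $n\theta^{\,q}$, so the argument genuinely has to exploit conservation $\bone_{\widetilde n}^\T\vec v(t)=n$ together with the bounded residence time of $y$-mass on the virtual nodes (in effect an absolute-probability-sequence argument for the column-stochastic products $\Phi_t^\sB(\cdot)$) rather than iterating a per-block estimate. The buffer-size bound in the first step, and pinning down the exact constants (the $2$ and the precise window length) in the second step, also need some care but are routine.
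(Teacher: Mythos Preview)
Your treatment of part~(a) and of the short-time case $t<d_gb$ in part~(b) is essentially the paper's argument: lower-bound the nonzero entries of $\widetilde A(k),\widetilde B(k)$ by $1/\widetilde n$, use Lemma~\ref{lemma1} to obtain connectivity over windows of length $d_gb$, and then invoke the standard ergodicity contraction. For $t<d_gb$ the diagonal bound $[\Phi_t^\sB(0)]_{ii}\ge (1/\widetilde n)^t\ge (1/\widetilde n)^{d_gb-1}=\widetilde n\,\theta\ge n\theta$ is exactly what the paper does.

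The gap is in your handling of $t\ge d_gb$ in part~(b), and you have diagnosed it correctly but missed the simple fix. You establish only the $\cV\times\cV$ lower bound $[\Phi_{d_gb}^\sB(k)]_{ij}\ge\theta$ for $i,j\in\cV$; with that, the block-by-block induction indeed loses a factor of~$\theta$, because mass can leak to virtual nodes between blocks. The paper instead proves the stronger base case
\[
[\Phi_{d_gb}^\sB(k)]_{ij}\ge\theta\qquad\text{for all }i\in\cV\ \text{and all }j\in\widetilde\cV,
\]
i.e.\ original rows but \emph{all} columns, including virtual ones. This is available precisely because of the structure you mention but do not exploit in the base case: a $y$-type virtual node $v_{y,\ell}^{(u)}$ passes its entire mass deterministically (weight~$1$) down the chain to the original node $\ell$ in $u\le b$ steps, after which it spreads as from any original node. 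Once you have this $\cV\times\widetilde\cV$ bound uniformly in $k$, the induction on $t$ (one step at a time, not over blocks) is a single line using column-stochasticity of $\widetilde B$: for $i\in\cV$ and any $j\in\widetilde\cV$,
\[
[\Phi_{t+1}^\sB(k-1)]_{ij}=\sum_{u\in\widetilde\cV}[\Phi_t^\sB(k)]_{iu}\,[\widetilde B(k-1)]_{uj}\ \ge\ \theta\sum_{u\in\widetilde\cV}[\widetilde B(k-1)]_{uj}\ =\ \theta.
\]
Summing over $j\in\cV$ gives $\sum_{j=1}^n[\Phi_t^\sB(0)]_{ij}\ge n\theta$ for all $t\ge d_gb$. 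No mass-conservation bookkeeping, no absolute-probability-sequence argument, and no factor is lost.
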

\begin{proof}
	In view of Lemma \ref{lemma1}, both $\Phi_b^\sA (k)$ and $\Phi_b^\sB (k)$ are primitive for all $k$. Moreover, the minimum value of nonzero elements of $\tilA(k)$ and $\tilB(k)$ is greater than $1/\widetilde{n}$. Then, the proof of the first part is similar with that of Lemma 5 in \cite{nedic2010convergence}.
	
	To prove (b), two cases are separately studied. If $t< d_gb$, then $[\Phi_{t}^\sB(0)]_{ii}\geq[\widetilde B(t-1)]_{ii}[\widetilde B(t-2)]_{ii}\cdots[\widetilde B(0)]_{ii}\geq (1/\widetilde{n})^{d_gb-1}=  \widetilde{n}\theta\geq n\theta$, and hence the result is obtained.
	
	If $t\geq d_gb$, it follows from a similar argument with the Lemma 2(b) in \cite{nedic2010convergence} that
	$[\Phi_{d_gb}(k)]_{ij}\geq \theta$ for all $i\in\cV$ and $j\in\widetilde\cV$. Then,
	\bea
	&[\Phi_{d_gb+1}(k-1)]_{ij}\\
	&=\sum_{u=1}^{\widetilde n}[\Phi_{d_gb}(k)]_{iu}[\widetilde B(k)]_{uj} \geq \theta\sum_{u=1}^{\widetilde n}[\widetilde B(k)]_{uj}\geq \theta.
	\ena
	where the last inequality follows from the column-stochasticity of $\widetilde B(k)$. The desired result is obtained by induction.
\end{proof}

The following lemma is a direct result of Lemma \ref{lemma2}, which specifies $\mu$ and $\widetilde{t}$ in Theorem \ref{theo1}.
\begin{lemma}\label{lemma3}
	Under assumptions of  Lemma \ref{lemma2}, let $\mu=\frac{\theta^2}{2\widetilde n}<1$ and $\widetilde t\in\bN$ be a number such that $2\rho^{\widetilde t}\leq\mu$, then for all $k\in\bN$,
	\bea
	\|\Phi_{\widetilde t}^\sA (k)- \bone\phi_{\widetilde t}^\sA (k)^\T\|_\sF\leq\mu,\ \|\Phi_{\widetilde t}^\sB (k)- \phi_{\widetilde t}^\sB (k)\bone^\T\|_\sF\leq\mu
	\ena
\end{lemma}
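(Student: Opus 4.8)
The plan is to obtain Lemma~\ref{lemma3} as an immediate corollary of Lemma~\ref{lemma2}(a), so essentially no new machinery is needed. First I would record that $\mu=\frac{\theta^2}{2\widetilde n}$ lies in $(0,1)$: it is strictly positive because $\theta>0$ by Lemma~\ref{lemma2}(a), and strictly less than $1$ because $\theta\in(0,1)$ and $\widetilde n=n(2b+1)\ge 3$, so in fact $\mu<\tfrac12$. This positivity is the only thing that makes the rest work, since it guarantees that the geometrically decaying sequence in Lemma~\ref{lemma2}(a) eventually drops below the threshold $\mu$.

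Next I would argue that a number $\widetilde t\in\bN$ with $2\rho^{\widetilde t}\le\mu$ exists, and is $k$-independent. By Lemma~\ref{lemma2}(a) we have $\rho=(1-\theta)^{1/(d_g b)}\in(0,1)$, hence $2\rho^{t}\downarrow 0$ as $t\to\infty$; since $\mu>0$, there is a smallest index for which $2\rho^{t}\le\mu$, and I would take $\widetilde t$ to be that index. The key observation is that the bound $\|\Phi_t^\sA(k)-\bone\phi_t^\sA(k)^\T\|_\sF\le 2\rho^{t}$ (and its $\widetilde B$ analogue) in Lemma~\ref{lemma2}(a) holds uniformly over $k\in\bN$, so one and the same $\widetilde t$ serves every $k$.

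Finally I would instantiate Lemma~\ref{lemma2}(a) at $t=\widetilde t$, letting $\phi_{\widetilde t}^\sA(k)$ and $\phi_{\widetilde t}^\sB(k)$ be the stochastic vectors it provides: for every $k\in\bN$, $\|\Phi_{\widetilde t}^\sA(k)-\bone\phi_{\widetilde t}^\sA(k)^\T\|_\sF\le 2\rho^{\widetilde t}\le\mu$ and $\|\Phi_{\widetilde t}^\sB(k)-\phi_{\widetilde t}^\sB(k)\bone^\T\|_\sF\le 2\rho^{\widetilde t}\le\mu$, which is exactly the asserted pair of inequalities. There is no genuine obstacle here; the proof is pure substitution, and the only points deserving an explicit line are the positivity of $\mu$ (so that a valid $\widetilde t$ exists at all) and the uniformity in $k$ of the estimate borrowed from Lemma~\ref{lemma2}(a).
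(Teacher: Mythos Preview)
Your proposal is correct and matches the paper's approach: the paper states that Lemma~\ref{lemma3} ``is a direct result of Lemma~\ref{lemma2}'' and offers no further argument, and your substitution of $t=\widetilde t$ into the uniform-in-$k$ bound of Lemma~\ref{lemma2}(a) is exactly that direct result.
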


Finally, we introduce the \emph{absolute probability sequence}  \cite{touri2012product}.
\begin{lemma}[Theorem 4.2 in \cite{touri2012product}]\label{lemma7}
	For a sequence of row-stochastic matrices $\{A(k)\}$, there exists a sequence of stochastic vectors $\{\pi(k)\}$ satisfying
	\bee\label{abseq}
	\pi(k+1)^\T A(k)=\pi(k)^\T,\ \forall k\in\bN.
	\ene
	$\{\pi(k)\}$ is called an absolute probability sequence of $\{A(k)\}$.
\end{lemma}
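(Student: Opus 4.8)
The plan is to establish existence by a compactness argument applied to finite-horizon truncations, which is the classical Kolmogorov construction of an absolute probability sequence. First I would introduce the backward products $\Phi(N,k):=A(N-1)A(N-2)\cdots A(k)$ for $0\le k\le N$, with the convention $\Phi(k,k)=I$; since each $A(k)$ is row-stochastic and row-stochasticity is preserved under matrix products, every $\Phi(N,k)$ is row-stochastic, and clearly $\Phi(N,k)=\Phi(N,k+1)A(k)$. Now fix an arbitrary stochastic vector $v$ (e.g. $v=\tfrac1n\bone_n$) and, for each horizon $N\in\bN$, define the finite family $\pi^{(N)}(k)^\T:=v^\T\Phi(N,k)$ for $0\le k\le N$. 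Each $\pi^{(N)}(k)$ is a stochastic vector, being the product of a stochastic vector with a row-stochastic matrix, and by telescoping we get $\pi^{(N)}(k+1)^\T A(k)=v^\T\Phi(N,k+1)A(k)=v^\T\Phi(N,k)=\pi^{(N)}(k)^\T$ for all $0\le k<N$. Thus each truncation satisfies the target recursion \eqref{abseq} on $\{0,\dots,N-1\}$.

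Second, I would pass to the limit $N\to\infty$ by a diagonal extraction. All vectors $\pi^{(N)}(k)$ lie in the probability simplex $\Delta_n=\{w\in\bR^n:w\ge\bzero_n,\ \bone_n^\T w=1\}$, which is compact. Hence $\{\pi^{(N)}(0)\}_{N}$ has a convergent subsequence; passing to that subsequence, $\{\pi^{(N)}(1)\}$ has a further convergent sub-subsequence; iterating over $k$ and taking the diagonal subsequence $\{N_j\}_{j\ge1}$, we obtain $\pi^{(N_j)}(k)\to\pi(k)\in\Delta_n$ as $j\to\infty$, simultaneously for every $k\in\bN$. For each fixed $k$, the identity $\pi^{(N_j)}(k+1)^\T A(k)=\pi^{(N_j)}(k)^\T$ holds for every $j$ with $N_j>k$; letting $j\to\infty$ and using continuity of matrix multiplication yields $\pi(k+1)^\T A(k)=\pi(k)^\T$. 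Since each $\pi(k)$ is a limit of stochastic vectors it is stochastic, so $\{\pi(k)\}$ is the desired absolute probability sequence.

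The only delicate point is the diagonal argument that produces a single subsequence working for all indices $k$ at once; equivalently, this can be phrased as a finite-intersection-property (Tychonoff) argument on the compact product space $\Delta_n^{\bN}$: the set of sequences satisfying \eqref{abseq} is closed, and every finite sub-family of the constraints is satisfiable by the truncations above, hence the full constraint set is nonempty. Everything else is routine bookkeeping — row-stochasticity is stable under products and under left multiplication by a stochastic vector, and the simplex is compact — so no substantial estimate is needed. I would also remark that the construction gives existence but not uniqueness (a sequence $\{A(k)\}$ may admit several absolute probability sequences), which is harmless since only the existence of one such sequence is invoked in the proof of Theorem \ref{theo1}.
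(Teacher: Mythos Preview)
Your argument is correct: the backward products $\Phi(N,k)$ are row-stochastic, the truncated sequences $\pi^{(N)}(k)^\T=v^\T\Phi(N,k)$ satisfy the recursion on $\{0,\dots,N-1\}$, and the diagonal extraction on the compact simplex $\Delta_n$ (equivalently, the Tychonoff/finite-intersection phrasing) yields a limit sequence $\{\pi(k)\}$ satisfying \eqref{abseq} for all $k$. The remark on non-uniqueness is also accurate and appropriate.

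As for the comparison: the paper does not actually prove this lemma. It is stated verbatim as Theorem~4.2 of \cite{touri2012product} and invoked without argument; the surrounding text only uses the consequence $\pi(k+t)^\T\Phi_t^\sA(k)=\pi(k)^\T$. So you have supplied a self-contained proof where the paper relies on an external reference. Your compactness construction is in fact the standard route to this result (and is essentially what the cited reference does), so there is no methodological divergence to discuss --- you have simply filled in what the paper outsourced.
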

%The proof of Lemma \ref{lemma7} can be found in Theorem 4.2 of \cite{touri2012product}. 

In the sequel, we use $\pi(k)\in\bR^{\widetilde n}$ to denote an absolute probability sequence of $\tilA(k)$, which implies that
$\pi(k+t)^\T\Phi_t^\sA(k)=\pi(k)^\T,\forall k,t\in\bN$.

\section{Proof of Theorem \ref{theo1} via LMIs}\label{sec5}

Under the augmented time-varying digraph, we are ready to prove Theorem \ref{theo1}.

\subsection{Outline of the proof}
As \cite{nedic2017achieving}, for a nonnegative sequence $\{p(k)\}$, we define that
\bee\label{eq1_sec4}
p^{\lambda,k} =\sup_{t\in\bN,t\leq k}\frac{p(t)}{\lambda^t}~ \text{and} ~\lambda\in(0,1).
\ene
We call $\{p^{\lambda,k}\}$ the $\lambda$-sequence of $p(k)$. Clearly, if $p^{\lambda,k}$ is uniformly bounded by some constant $c$, then $p(k)\leq c\lambda^k$ for all $k$. Our method to prove Theorem 1 is to show the boundedness of $p^{\lambda,k},k\in\bN$ in \eqref{eq1_sec4} for some nonnegative sequences $p(k)$.

Under Assumptions \ref{assum}, \ref{assum3} and \ref{assum6} and \eqref{eq1_sec2}, the proof of Theorem 1 relies on four lemmas, whose proofs  are given in next subsections.

\begin{lemma}\label{lemma_m1} 
Let %$\pi(k)$ be defined as in \eqref{abseq},
\bee
Q(k)=I_{\widetilde n}-\bone_{\widetilde n}\pi(k-1)^\T,\ \|\tilX(k)\|_\sQ=\|Q(k)\tilX(k)\|_\sF,
\ene
and $\|\tilX\|_\sQ^{\lambda,k}$ be the $\lambda$-sequence of $\|\tilX(k)\|_\sQ$. If $\lambda^{\widetilde t}\geq\frac{2\theta^2}{n}$, where $\widetilde t$ is defined in Lemma \ref{lemma3}, then
\bea\label{eq_s1}
\|\widetilde{X}\|_\sQ^{\lambda,k}\leq \frac{4\widetilde{n}\widetilde{t}\bar{\gamma}}{\theta^2}\|\widetilde{Y}\|_\sF^{\lambda,k}+c_1
\ena
where $\|\widetilde{Y}\|_\sF^{\lambda,k}$ is the $\lambda$-sequence of $\|\widetilde{Y}(k)\|_\sF$ and $c_1$ is a constant.\hfill$\square$
\end{lemma}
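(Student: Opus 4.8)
The goal is to bound the consensus error of $\widetilde X$ in the $\lambda$-weighted sense by the $\lambda$-weighted size of $\widetilde Y$, up to an additive constant. The plan is to iterate the first relation of \eqref{eq1_sec2} over a window of length $\widetilde t$ and exploit that $Q(k)=I_{\widetilde n}-\bone_{\widetilde n}\pi(k-1)^\T$ is, by the absolute probability property $\pi(k)^\T\widetilde A(k-1)=\pi(k-1)^\T$ (equivalently $\pi(k+t)^\T\Phi_t^\sA(k)=\pi(k)^\T$), the natural projection that annihilates the ``average'' component and is propagated consistently by the dynamics. Concretely, one first checks the one-step identity $Q(k+1)\widetilde A(k) = Q(k+1)\widetilde A(k) Q(k)$ (since $\bone\pi(k)^\T$ is a fixed point of right-multiplication by $\widetilde A(k)$ in the appropriate sense), so that applying $Q$ to \eqref{eq1_sec2} gives
\bea\label{eq_plan1}
Q(k+1)\widetilde X(k+1) = Q(k+1)\widetilde A(k)\big(Q(k)\widetilde X(k) - \Gamma I_k^a \widetilde Y(k)\big).
\ena
Iterating this $\widetilde t$ times and using $\pi(k+\widetilde t)^\T\Phi_{\widetilde t}^\sA(k)=\pi(k)^\T$ together with the contraction estimate from Lemma~\ref{lemma3}, namely $\|\Phi_{\widetilde t}^\sA(k)-\bone\phi_{\widetilde t}^\sA(k)^\T\|_\sF\le\mu=\theta^2/(2\widetilde n)$, yields a bound of the form $\|\widetilde X(k+\widetilde t)\|_\sQ \le \mu\,\|\widetilde X(k)\|_\sQ + (\text{terms involving }\bar\gamma\|\widetilde Y(j)\|_\sF$ for $j$ in the window$)$, plus the handling of the first $\widetilde t$ steps as the constant $c_1$.

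Next I would pass to $\lambda$-sequences. Dividing the windowed inequality by $\lambda^{k+\widetilde t}$ and taking the supremum over $k$, the consensus term contributes $\mu\lambda^{-\widetilde t}\|\widetilde X\|_\sQ^{\lambda,k}$; this is a genuine contraction precisely when $\mu\lambda^{-\widetilde t}<1$, i.e. when $\lambda^{\widetilde t}>\mu=\theta^2/(2\widetilde n)$ — but the hypothesis of the lemma is $\lambda^{\widetilde t}\ge 2\theta^2/n = 4\widetilde n\mu/\widetilde n\cdot(\ldots)$; one checks $2\theta^2/n \ge 2\theta^2/(2\widetilde n)\cdot 2 > \mu$ since $\widetilde n = n(2b+1) > n$, so the contraction factor $\mu\lambda^{-\widetilde t}$ is at most $1/2$ (or some fixed constant $<1$), and the geometric series closes. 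The driving term, after bounding $\|\Gamma I_k^a\widetilde Y(j)\|_\sF \le \bar\gamma\|\widetilde Y(j)\|_\sF$ and summing the $\widetilde t$ window terms each dominated by $\lambda^{j}\|\widetilde Y\|_\sF^{\lambda,k}$ with an extra factor accounting for $\lambda^{-\widetilde t}\le$ const, produces the coefficient $4\widetilde n\widetilde t\bar\gamma/\theta^2$ after absorbing $1/(1-\mu\lambda^{-\widetilde t})$, $\widetilde n$ (from Frobenius norm bounds on $Q$ and $\Phi^\sA$), and $\widetilde t$ (number of window terms). The residual contribution from the initial block $k<\widetilde t$, bounded using the fixed initial data $\widetilde X(0)$ and finitely many steps, is collected into $c_1$.

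The main obstacle, and the step requiring the most care, is the bookkeeping that turns the windowed recursion into the clean $\lambda$-sequence bound with the stated explicit constant: one must be careful that (i) the projection $Q(k)$ commutes correctly through the window — this needs the absolute probability sequence identity at every step, not just asymptotically, which is exactly why Lemma~\ref{lemma7} is invoked rather than the Perron vector of a fixed matrix; (ii) the cross terms $\Phi^\sA_{s}(k)\Gamma I^a_{k+s}\widetilde Y(k+s)$ over $s=0,\dots,\widetilde t-1$ are each bounded uniformly (here $\|\Phi^\sA_s(k)\|_\sF$ is controlled by $\sqrt{\widetilde n}$ using row-stochasticity, giving the $\widetilde n$ factor); and (iii) the threshold on $\lambda$ is used only to guarantee $\mu\lambda^{-\widetilde t}\le 1/2$ so that $1/(1-\mu\lambda^{-\widetilde t})\le 2$, which is what lets the final constant stay as small as $4\widetilde n\widetilde t\bar\gamma/\theta^2$. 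Everything else — the norm inequalities, the stochasticity facts, the telescoping of \eqref{eq_plan1} — is routine once \eqref{eq_plan1} is established.
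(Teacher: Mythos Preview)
Your proposal is correct and follows essentially the same route as the paper: unroll the $\widetilde X$-recursion over a window of length $\widetilde t$, use the absolute-probability identity and $Q(k+\widetilde t)\bone=0$ so that $Q(k+\widetilde t)\Phi_{\widetilde t}^\sA(k)Q(k)\widetilde X(k)=Q(k+\widetilde t)\bigl(\Phi_{\widetilde t}^\sA(k)-\bone\phi_{\widetilde t}^\sA(k)^\T\bigr)Q(k)\widetilde X(k)$, apply Lemma~\ref{lemma3} to contract, and then pass to $\lambda$-sequences. The only differences are cosmetic: the paper packages the $\lambda$-sequence step into its Proposition~\ref{lemma4} rather than arguing inline, and it keeps the extra factor $\|Q(k+\widetilde t)\|_2$ (of order $\sqrt{\widetilde n}$) on the contraction term, giving a recursion factor $\sqrt{\widetilde n}\mu$ rather than the bare $\mu$ you wrote---a bookkeeping detail that only shifts constants.
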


$\|\tilX(k)\|_\sQ$ in Lemma \ref{lemma_m1}  is a weighted difference among nodes' states $\vx_i(k)$,  and \eqref{eq_s1} bounds it by the gradient estimate $\|\widetilde{Y}(k)\|_\sF$.

\begin{lemma}\label{lemma_m2} 
Let
\bea\label{eq4_s2}
\vec v(k+1)&=\widetilde{B}(k)\vec v(k),\ &\vec v(0)=[\bone_{n};\bzero_{\widetilde n-n}],\\ V(k)&=\diag(\vv(k)),\ &Y_\sV(k)= V(k)^{\dag}\widetilde{Y}(k)
\ena
where $V(k)^{\dag}$ is the pseudo inverse of $V(k)$, i.e.,
\bea
[V(k)^{\dag}]_{ij}=\left\{\begin{array}{ll}
	1/[V(k)]_{ii}, & \text{if $i=j$ and $[V(k)]_{ii}>0$,} \\
	0,             & \text{otherwise.}
\end{array}\right.
\ena
Let $\tilI(k)=V(k)V(k)^{\dag}$, $\widetilde\bone(k)=\tilI(k)\bone_{\widetilde n}$ and
\bee\label{eq1_c2}
S(k)=\tilI(k)-\frac{1}{n}\widetilde \bone(k)\vv(k)^\T,\ \|Y_\sV(k)\|_\sS=\|S(k)Y_\sV(k)\|_\sF
\ene
Define the corresponding $\lambda$-sequence $\|Y_\sV\|_\sS^{\lambda,k}$ of $\{\|Y_\sV(k)\|_\sS\}$. If $\lambda^{\widetilde t}>\frac{2\theta}{1+\theta}$, where $\widetilde t$ and $\mu$ are defined in Lemma \ref{lemma3}, then
\bee\label{eq_s2}
\|Y_\sV\|_\sS^{\lambda,k}\leq
\frac{8\beta\widetilde n\sqrt{\widetilde n}\widetilde t}{\theta^2 n}\left(\sqrt{n}\|\widetilde{X}\|_\sQ^{\lambda,k}+\bar{\gamma}\|\widetilde{Y}\|_\sF^{\lambda,k}\right)+c_2
\ene
where $\beta$ is given in Assumption \ref{assum}, $\theta,\widetilde t,\widetilde{n}$ are defined in Lemmas \ref{lemma2} and \ref{lemma3}, and $c_2$ is given by \eqref{eq_c2}.\hfill$\square$
\end{lemma}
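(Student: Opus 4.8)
The plan is to mimic the structure of the proof of Lemma \ref{lemma_m1}, but now tracking a weighted consensus residual of the gradient-tracking variable $\tilY(k)$ rather than of $\tilX(k)$. The natural weighting for $\tilY$ is the column-stochastic counterpart of the absolute probability sequence, namely the forward-propagated vector $\vv(k)=\Phi_k^\sB(0)\vv(0)$; this is precisely why $V(k)=\diag(\vv(k))$ and the normalized variable $Y_\sV(k)=V(k)^\dag\tilY(k)$ are introduced. First I would establish the dynamics of $Y_\sV(k)$: from $\tilY(k+1)=\tilB(k)\tilY(k)+\nabla(k+1)-\nabla(k)$ and $\vv(k+1)=\tilB(k)\vv(k)$, one checks that $V(k+1)^\dag\tilB(k)V(k)$ acts (on the support of $\vv(k)$) as a \emph{row-stochastic} matrix $\haB(k)$, so that
\bea
Y_\sV(k+1)=\haB(k)Y_\sV(k)+V(k+1)^\dag(\nabla(k+1)-\nabla(k)).
\ena
The consensus-error operator $S(k)$ in \eqref{eq1_c2} is the projection that annihilates $\widetilde\bone(k)$ along $\vv(k)$, and $\bone_{\widetilde n}^\T\tilY(k)=\bone_n^\T\nabla\vf(X(k))$ from \eqref{eq4_sec2} pins down the component of $Y_\sV$ that $S(k)$ removes.

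Next I would unroll the recursion over a window of length $\widetilde t$ (the contraction horizon from Lemma \ref{lemma3}), writing $Y_\sV(k)$ in terms of $Y_\sV(k-\widetilde t)$ plus an accumulated gradient-difference driving term. Applying $S(k)$ and using that $\haB(k)$ inherits the rank-one-convergence estimate of $\Phi_{\widetilde t}^\sB$ from Lemma \ref{lemma3} (this is where the hypothesis $\lambda^{\widetilde t}>\frac{2\theta}{1+\theta}$ enters, ensuring the contraction factor $\mu/\lambda^{\widetilde t}$ beats $1$ after normalization), I get a bound of the form $\|Y_\sV(k)\|_\sS\le (\text{contraction})\,\|Y_\sV(k-\widetilde t)\|_\sS + \sum_{j}\|\text{gradient differences}\|$. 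The gradient-difference terms $\nabla\vf(X(j+1))-\nabla\vf(X(j))$ are controlled by $\beta$-Lipschitz smoothness (Assumption \ref{assum}(b)) in terms of $\|X(j+1)-X(j)\|_\sF$, and the latter is bounded using \eqref{eq1_sec2}: $X(k+1)-X(k) = (\tilA(k)-I)\tilX(k)[\,I_n\,;0\,] - \tilA(k)\Gamma I_k^a\tilY(k)$, which splits into a consensus-error part (bounded by $\|\tilX(k)\|_\sQ$, up to the relation between $Q$-norm and the raw increment) and a stepsize-weighted part (bounded by $\bar\gamma\|\tilY(k)\|_\sF$). Passing to $\lambda$-sequences term by term — using $\sum_{t\le k}\lambda^{-t}\|\cdot\|$-type telescoping exactly as in the definition \eqref{eq1_sec4} — converts these into $\|\tilX\|_\sQ^{\lambda,k}$ and $\|\tilY\|_\sF^{\lambda,k}$, and collecting constants yields \eqref{eq_s2} with the stated coefficient $\frac{8\beta\widetilde n\sqrt{\widetilde n}\widetilde t}{\theta^2 n}$ and an initial-condition constant $c_2$ as in \eqref{eq_c2}.

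The main obstacle I anticipate is handling $S(k)$ and $V(k)^\dag$ carefully when some entries of $\vv(k)$ vanish: the support of $\vv(k)$ can change with $k$ (a virtual $y$-node carries mass only once information has reached it), so $\tilI(k)=V(k)V(k)^\dag$ is a time-varying coordinate projection, and one must verify that $\haB(k)$ is genuinely row-stochastic on the relevant support and that $S(k)$ still contracts along the propagated dynamics. Lemma \ref{lemma2}(b), which guarantees $\sum_{j=1}^n[\Phi_t^\sB(0)]_{ij}\ge n\theta$, is exactly the tool that keeps $\vv(k)$ uniformly bounded away from zero on its support (so $V(k)^\dag$ does not blow up), and this is what produces the $1/\theta^2$ and $1/n$ factors; getting the bookkeeping of these lower bounds right through the window-unrolling is the delicate part. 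The rest — Lipschitz smoothness, the increment bound for $X$, and the $\lambda$-sequence manipulations — is routine and parallels Lemma \ref{lemma_m1}.
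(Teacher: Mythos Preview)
Your proposal is correct and follows essentially the same route as the paper: derive the normalized recursion $Y_\sV(k+1)=\widetilde B_\sV(k)Y_\sV(k)+V(k+1)^\dag R(k)$ with $\widetilde B_\sV(k)=V(k+1)^\dag\widetilde B(k)V(k)$ row-stochastic on the support of $\vv(k)$, unroll $\widetilde t$ steps, use Lemma~\ref{lemma3} to show $\widetilde\Phi_{\widetilde t}(k)=\frac{1}{n}\widetilde\bone(k+\widetilde t)\vv(k)^\T+C_{\widetilde t}(k)$ with $\|C_{\widetilde t}(k)\|_\sF\le\theta^{-1}\mu n$, bound $\|R(k)\|_\sF$ via Lipschitz smoothness and the increment estimate $\|\widetilde X(k+1)-\widetilde X(k)\|_\sF\le 2\sqrt n\,\|\widetilde X(k)\|_\sQ+\bar\gamma\|\widetilde Y(k)\|_\sF$ (using $(\widetilde A(k)-I)\bone=0$), and then invoke Proposition~\ref{lemma4}. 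The only minor imprecision is that the effective contraction factor after normalization is $2\theta^{-1}\mu n=\theta n/\widetilde n$ (not $\mu$ itself), which is what actually makes the hypothesis $\lambda^{\widetilde t}>\frac{2\theta}{1+\theta}$ sufficient; also, Lemma~\ref{lemma2}(b) directly bounds $v_i(k)$ only for $i\in\cV$, but this suffices because $R(k)$ is supported on $\cV$---the paper handles the virtual rows through $\widetilde I(k)$ exactly as you anticipate.
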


Similarly, $\|Y_\sV(k)\|_\sS$ measures the differences between the weighted gradient estimates of different nodes, which is bounded by $\|\tilX(k)\|_\sQ$ and $\|\widetilde{Y}(k)\|_\sF$.

\begin{lemma}\label{lemma_m3}
Let
\bee\label{eq9_s3}
\vec x_{\pi}(k)=\vec \pi(k)^\T\tilX(k)
\ene
Define $\widetilde f^{\lambda,k}$ be the $\lambda$-sequence of $\{\sqrt{f(\vec x_{\pi}(k))-f^\star}\}$. If $\bar{\gamma}\leq \frac{1}{4nb\beta}$ and $\lambda^b\geq 1-\frac{1}{8}\alpha \underline\gamma\theta n$. Then,
\bee\label{eq_s3}
\widetilde f^{\lambda,k}\leq\frac{16b}{\alpha\theta\sqrt{\bar{\gamma} \theta n}}\left({3n\beta}\|\tilX\|_\sQ^{\lambda,k}+\|Y_\sV\|_\sS^{\lambda,k}\right)+c_3
\ene
where $\alpha,\beta$ are given in Assumption \ref{assum}, $b$ is defined in Lemma \ref{lemma1}, $\theta$ is defined in Lemma \ref{lemma2}, and $c_3$  is a constant.\hfill$\square$
\end{lemma}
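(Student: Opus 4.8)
The plan is to carry out, in the fully asynchronous setting, the reasoning that produced the exact gradient-descent recursion \eqref{eq_gd} in the synchronous case of Section~\ref{sec3.3}: I would show that the $\pi(k)$-weighted average $\vx_\pi(k)=\pi(k)^\T\tilX(k)$ obeys an \emph{inexact} gradient step on $f$, and then invoke $\beta$-smoothness together with the PL inequality \eqref{pl}. The starting point is the absolute-probability identity \eqref{abseq}, $\pi(k+1)^\T\tilA(k)=\pi(k)^\T$, which with the first line of \eqref{eq1_sec2} gives the exact one-step recursion
\begin{equation}
\vx_\pi(k+1)=\vx_\pi(k)-\pi(k)^\T\Gamma I_k^a\,\tilY(k),
\end{equation}
so that, unlike the raw state $\tilX$, the averaged state $\vx_\pi$ is driven by $\tilY$ alone, with no direct consensus error, exactly as in \eqref{eq_gd}.

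Next I would telescope this identity over a window of $b$ steps ($b$ as in Lemma~\ref{lemma1}(b)) and split $\tilY(s)$ into a gradient-tracking part and a consensus error. With $\vv(s)$ as in \eqref{eq4_s2}, the identity $\bone_{\widetilde n}^\T\tilY(s)=\bone_n^\T\nabla\vf(X(s))$ from \eqref{eq4_sec2}, the fact that $\bone_{\widetilde n}^\T\vv(s)=n$, and the definitions of Lemma~\ref{lemma_m2} yield the exact decomposition $\tilY(s)=\tfrac1n\vv(s)\bone_n^\T\nabla\vf(X(s))+V(s)S(s)Y_\sV(s)$ with $\|V(s)S(s)Y_\sV(s)\|_\sF\le n\|Y_\sV(s)\|_\sS$. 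Using Assumption~\ref{assum}(b) I would then replace $\nabla\vf(X(s))$ by $\nabla\vf(\bone_n\vx_\pi(k)^\T)$ at a cost of $\beta\|X(s)-\bone_n\vx_\pi(k)^\T\|_\sF$, which over the window is controlled by $\|\tilX\|_\sQ$ and by the displacement of $\vx_\pi$ inside the window (itself $\cO(\bar\gamma\|\tilY\|_\sF)$). Since $\bone_n^\T\nabla\vf(\bone_n\vx_\pi(k)^\T)=\nabla f(\vx_\pi(k))$, telescoping then produces
\begin{equation}
\vx_\pi(k+b)=\vx_\pi(k)-\tfrac{r_k}{n}\,\nabla f(\vx_\pi(k))+\vec e(k),\qquad r_k:=\sum_{s=k}^{k+b-1}\pi(s)^\T\Gamma I_s^a\vv(s),
\end{equation}
where each summand of $\vec e(k)$ carries a factor $\bar\gamma$ and $\|\vec e(k)\|$ is bounded over the window by a constant times $n\beta\|\tilX\|_\sQ+\|Y_\sV\|_\sS+\bar\gamma\|\tilY\|_\sF$. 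The decisive quantitative ingredient is a lower bound $r_k\ge c_0\,\underline\gamma\theta n$ for a universal $c_0$: this follows from Lemma~\ref{lemma1}(a) (every node is activated at least once in the window, so the $I_s^a$ reach every real coordinate), from $[\vv(s)]_i=\sum_{j=1}^{n}[\Phi_s^\sB(0)]_{ij}\ge n\theta$ for $i\in\cV$ by Lemma~\ref{lemma2}(b), and from a uniform positive lower bound on the entries of the absolute probability sequence $\pi(s)$, extracted from the uniform ergodicity of Lemma~\ref{lemma2}(a) via $\pi(s)^\T=\pi(s+t)^\T\Phi_t^\sA(s)$.

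Finally I would feed this inexact step into the descent inequality for $f$ (which has $(n\beta)$-Lipschitz gradient, being a sum of $n$ functions with $\beta$-Lipschitz gradients) and apply \eqref{pl}, obtaining a per-window contraction of the form
\begin{equation}
f(\vx_\pi(k+b))-f^\star\le\Big(1-\tfrac{2\alpha r_k}{n}+\cO(\bar\gamma^2)\Big)\big(f(\vx_\pi(k))-f^\star\big)+\text{(error)},
\end{equation}
where $\bar\gamma\le\tfrac1{4nb\beta}$ guarantees the $\cO(\bar\gamma^2)$ term is dominated by half the linear term, so that, using $r_k\ge c_0\underline\gamma\theta n$, the contraction factor is at most $1-\tfrac14\alpha\underline\gamma\theta n$. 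The ``error'' is quadratic in $\|\tilX\|_\sQ$, $\|Y_\sV\|_\sS$ and $\bar\gamma\|\tilY\|_\sF$; I would re-express $\|\tilY(s)\|_\sF$ through the same decomposition together with $\|\nabla f\|^2\le 2n\beta(f-f^\star)$, so that the only $\sqrt{f-f^\star}$ contribution carries an extra factor $\bar\gamma$ and can be moved to the left-hand side. Taking square roots (to pass from $f-f^\star$ to $\sqrt{f-f^\star}$), passing to the $\lambda$-sequences of \eqref{eq1_sec4}, and using $\lambda^b\ge 1-\tfrac18\alpha\underline\gamma\theta n$ so that the resulting geometric recursion for $\widetilde f^{\lambda,k}$ has ratio bounded away from $1$, a standard summation gives \eqref{eq_s3}, with $c_3$ absorbing the finite contributions of the first window.

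I expect the main obstacle to be the lower bound $r_k\ge c_0\,\underline\gamma\theta n$: it requires a uniform positive lower bound on the entries of the absolute probability sequence $\pi(s)$, which is \emph{not} part of Lemma~\ref{lemma7} and has to be derived from the uniform ergodicity of Lemma~\ref{lemma2}, and it must be matched carefully against the weight bound of Lemma~\ref{lemma2}(b) across the sliding window. A secondary difficulty is the bookkeeping of the ``error'' term: one must verify that every contribution proportional to $f-f^\star$ (equivalently to $\widetilde f^{\lambda,k}$ after taking square roots) carries an extra factor $\bar\gamma$, which is exactly what the hypothesis that $\bar\gamma$ is small (here the explicit $\bar\gamma\le\tfrac1{4nb\beta}$) is there to exploit.
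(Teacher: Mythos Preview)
Your plan is essentially the paper's proof: telescope $\vx_\pi$ over a $b$-step window via the absolute-probability identity, split $\tilY(s)=\tfrac1n\vv(s)\bone^\T\nabla\vf(X(s))+V(s)S(s)Y_\sV(s)$, recognize an inexact gradient step on $f$, apply the perturbed-GD contraction under PL (packaged as Proposition~\ref{lemma6}), take square roots, and invoke the $\lambda$-sequence recursion (Proposition~\ref{lemma4}).

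One remark on the ``main obstacle'' you single out: the paper never needs a pointwise lower bound on the entries of $\pi(s)$. Instead of collapsing $\pi(k+b)^\T\Phi^\sA_{b-t}$ to $\pi(s)^\T$, it keeps them together and writes
\[
\sum_t\eta_k(t)\;\ge\;\underline\gamma\, n\theta\cdot\pi(k+b)^\T\Big[\sum_{t=0}^{b-1}\Phi^\sA_{b-t}(k{+}t{+}1)\,I_{k+t}^a\Big]\bone,
\]
using $[\Gamma\vv(k{+}t)]_i\ge\underline\gamma n\theta$ for $i\in\cV$ from Lemma~\ref{lemma2}(b). Because each $\Phi^\sA$ is row-stochastic and every real node is activated at least once within $b$ steps (Lemma~\ref{lemma1}(a)), the bracketed matrix has every row sum $\ge1$; since $\pi(k+b)$ is a stochastic vector this already gives $\sum_t\eta_k(t)\ge\underline\gamma\theta n$, with no appeal to entrywise bounds on $\pi$.

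A secondary difference: in the paper the gradient-replacement error $\vh(k)$ is bounded purely by consensus quantities $\|\tilX(\cdot)\|_\sQ$ (eq.~\eqref{eq4_s3}), so \eqref{eq_s3} contains no $\|\tilY\|_\sF^{\lambda,k}$ and the absorption step you describe is not needed; your route of carrying a $\bar\gamma\|\tilY\|_\sF$ term and absorbing it would also work, just with looser constants.
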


Intuitively, $\vec x_{\pi}(k)$ is a weighted average of $\vx_i(k),i\in\cV$, and $f(\vec x_{\pi}(k))-f^\star$ is the optimality gap.  Eq. \eqref{eq9_s3} shows that the square root of the optimality gap can be bounded by $\|\tilX\|_\sQ^{\lambda,k}$ and $\|Y_\sV\|_\sS^{\lambda,k}$.

\begin{lemma}\label{lemma_m4} 
With the above-defined $\|\tilX\|_\sQ^{\lambda,k}$, $\|Y_\sV\|_\sS^{\lambda,k}$ and $\widetilde f^{\lambda,k}$, it holds that
\bea\label{eq_s4}
&\|\widetilde{Y}\|_\sF^{\lambda,k}\\
&\leq 2n(\sqrt{n}+1)\beta\sqrt{b}\|\widetilde{X}\|_\sQ^{\lambda,k}+n\|Y_\sV\|_\sS^{\lambda,k}+\frac{2n\beta\sqrt{2nb}}{\sqrt{\alpha}}\widetilde f^{\lambda,k}
\ena
where $\beta$ and  $b$ are given in Assumption \ref{assum} and Lemma \ref{lemma1}, respectively.\hfill$\square$
\end{lemma}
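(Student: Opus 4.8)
Since Lemma~\ref{lemma_m4} places no lower bound on $\lambda$, the plan is to establish a \emph{pointwise} inequality of the claimed form for $\|\tilY(k)\|_\sF$ and then pass to $\lambda$-sequences essentially for free: if $p(k)\le a(k)+b(k)+c(k)$ holds for every $k$ with all sequences nonnegative, then dividing by $\lambda^t$ and taking $\sup_{t\le k}$ gives $p^{\lambda,k}\le a^{\lambda,k}+b^{\lambda,k}+c^{\lambda,k}$. (This is exactly why, unlike Lemmas~\ref{lemma_m1}--\ref{lemma_m3}, no restriction on $\lambda$ is needed here.) So the whole content is a deterministic decomposition of $\tilY(k)$ into a gradient-disagreement part controlled by $\|Y_\sV(k)\|_\sS$ and a ``mean'' part that tracks $\nabla f$ and is controlled by the consensus error and the optimality gap.

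The decomposition reuses the objects $V(k),V(k)^\dag,\tilI(k),S(k),Y_\sV(k)$ introduced for Lemma~\ref{lemma_m2}. The one genuinely structural ingredient is the support identity $\tilI(k)\tilY(k)=\tilY(k)$, i.e.\ $\tilY(k)$ vanishes wherever $\vv(k)$ does. I would prove it by induction on $k$: $\tilY(0)=\nabla(0)$ and $\vv(0)$ are both supported on $\cV$; each $\nabla(s)$ is supported on $\cV$, and $\cV\subseteq\mathrm{supp}(\vv(s))$ because every real node keeps a strictly positive self-weight in $\tilB(\cdot)$ (as $i\in\cN_\text{out}^i$); and since $\vv(k)=\tilB(k-1)\vv(k-1)$, the column-stochastic $\tilB(k-1)$ maps $\mathrm{supp}(\vv(k-1))$-supported vectors into $\mathrm{supp}(\vv(k))$-supported ones. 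Granting this, $\tilY(k)=V(k)V(k)^\dag\tilY(k)=V(k)Y_\sV(k)$, and expanding $Y_\sV(k)=S(k)Y_\sV(k)+\frac1n\widetilde\bone(k)\vv(k)^\T Y_\sV(k)$ (from the definition of $S(k)$ and $\tilI(k)Y_\sV(k)=Y_\sV(k)$) together with $V(k)\widetilde\bone(k)=\vv(k)$ yields
\bee\label{eq_ydec}
\tilY(k)=V(k)S(k)Y_\sV(k)+\tfrac1n\,\vv(k)\big(\vv(k)^\T Y_\sV(k)\big),
\ene
and $\vv(k)^\T Y_\sV(k)=\bone_{\widetilde n}^\T\tilI(k)\tilY(k)=\bone_{\widetilde n}^\T\tilY(k)=\bone_n^\T\nabla\vf(X(k))$ by \eqref{eq4_sec2}.

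It then remains to bound the two terms of \eqref{eq_ydec} pointwise. Column-stochasticity of $\tilB(\cdot)$ gives $\bone_{\widetilde n}^\T\vv(k)=\bone_{\widetilde n}^\T\vv(0)=n$, hence $\|V(k)\|_2\le n$ and $\|\vv(k)\|_2\le n$, while $\|S(k)Y_\sV(k)\|_\sF=\|Y_\sV(k)\|_\sS$ by definition; this handles the first term, giving $n\|Y_\sV(k)\|_\sS$. For the second, I would write $\bone_n^\T\nabla\vf(X(k))=\sum_{i=1}^n\nabla f_i(\vx_i(k))$, insert $\vx_\pi(k)=\pi(k)^\T\tilX(k)$, and use $\beta$-smoothness of each $f_i$ and $n\beta$-smoothness of $f=\sum_i f_i$ to get
\bea
\Big\|\sum_{i=1}^n\nabla f_i(\vx_i(k))\Big\|_2
&\le\beta\sum_{i=1}^n\|\vx_i(k)-\vx_\pi(k)\|_2+\|\nabla f(\vx_\pi(k))\|_2\\
&\le\beta\sqrt n\,\big\|X(k)-\bone_n\vx_\pi(k)^\T\big\|_\sF+\sqrt{2n\beta}\,\sqrt{f(\vx_\pi(k))-f^\star},
\ena
the last step being the standard descent-lemma bound $\|\nabla f(x)\|_2^2\le 2n\beta(f(x)-f^\star)$. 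Finally $\|X(k)-\bone_n\vx_\pi(k)^\T\|_\sF\le\|(I_{\widetilde n}-\bone_{\widetilde n}\pi(k)^\T)\tilX(k)\|_\sF$ since $X(k)$ is a sub-block of $\tilX(k)$, and because $Q(k)$ is built from $\pi(k-1)$ rather than $\pi(k)$ I would absorb the index shift via $(I-\bone\pi(k)^\T)\tilX(k)=Q(k)\tilX(k)-\bone(\pi(k)-\pi(k-1))^\T Q(k)\tilX(k)$ (using $(\pi(k)-\pi(k-1))^\T\bone=0$), which costs only a factor $1+\sqrt{\widetilde n}\|\pi(k)-\pi(k-1)\|_2=\cO(\sqrt{\widetilde n})=\cO(\sqrt{nb})$ and is the source of the $\sqrt b$ in \eqref{eq_s4}. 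Substituting into \eqref{eq_ydec}, loosening constants, and passing to $\lambda$-sequences then gives \eqref{eq_s4}.

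The main obstacle is the support identity $\tilI(k)\tilY(k)=\tilY(k)$: it is what makes the $V(k)^\dag$-based splitting legitimate and, crucially, what makes $\vv(k)^\T Y_\sV(k)$ collapse onto $\bone_n^\T\nabla\vf(X(k))$ via \eqref{eq4_sec2}, so that the gradient-tracking component of $\tilY(k)$ can be isolated and then estimated through $\vx_\pi(k)$. The remaining steps---Cauchy--Schwarz, $\beta$-smoothness, the descent lemma for the $n\beta$-smooth $f$, and bookkeeping of the $\pi(k)$ versus $\pi(k-1)$ shift---are routine; the precise constants in \eqref{eq_s4} come from slightly cruder intermediate estimates and are immaterial to the argument.
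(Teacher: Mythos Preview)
Your proposal is correct and follows the same overall decomposition as the paper: write $\tilY(k)=V(k)Y_\sV(k)$, split via $S(k)$ into a disagreement piece (bounded by $n\|Y_\sV(k)\|_\sS$) and a ``mean'' piece carrying $\bone_n^\T\nabla\vf(X(k))$, then pass to $\lambda$-sequences pointwise. The one substantive difference is in how you control the mean piece. The paper inserts $\vx^\star=\text{Proj}_{\cX^\star}(\vx_\pi(k))$, uses $\nabla f(\vx^\star)=0$ and $\beta$-smoothness to reduce to $\|X(k)-\bone_n(\vx^\star)^\T\|_\sF$, and then invokes the implication PL $\Rightarrow$ quadratic growth \cite[Theorem~2]{karimi2016linear} to turn $\|\vx_\pi(k)-\vx^\star\|_2$ into $\sqrt{2/\alpha}\,\widetilde f(k)$; the factor $\sqrt{b}$ there comes from $\|\widetilde\bone(k)\bone_n^\T\|_2\le n\sqrt{b+1}$. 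You instead insert $\vx_\pi(k)$ directly and use the descent-lemma bound $\|\nabla f(\vx_\pi)\|_2^2\le 2n\beta(f(\vx_\pi)-f^\star)$, which is more self-contained (no external PL$\Rightarrow$QG result needed) and in fact yields tighter constants than \eqref{eq_s4}; your ``loosening'' step is legitimate since $\alpha\le n\beta$ and $b\ge 1$. Your explicit bookkeeping of the $\pi(k)$ versus $\pi(k-1)$ shift is also a point the paper glosses over (it silently identifies $\pi(k-1)^\T\tilX(k)$ with $\vx_\pi(k)$ in the displayed chain), so your treatment is cleaner there; note however that in the paper the $\sqrt{b}$ originates from $\|\widetilde\bone(k)\|_2$, not from this shift.
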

\begin{proof}[Proof of Theorem \ref{theo1}]
Note that $\lambda$ defined in Theorem \ref{theo1} satisfies all the conditions on $\lambda$ of the above four lemmas and a sufficient small $\bar{\gamma}$ will satisfy the condition of Lemma \ref{lemma_m3}. Thus, \eqref{eq_s1}, \eqref{eq_s2}, \eqref{eq_s3} and \eqref{eq_s4} hold. Let 
\bee
\ve(k) = [\|\widetilde{X}\|_\sQ^{\lambda,k},
\|Y_\sV\|_\sS^{\lambda,k},
\widetilde f^{\lambda,k},
\|\widetilde{Y}\|_\sF^{\lambda,k}]^\T, c=[c_1,c_2,c_3,0]^\T.
\ene  
Combining \eqref{eq_s1}, \eqref{eq_s2}, \eqref{eq_s3}, and \eqref{eq_s4}, we obtain that for all $k\in\bN$,
\bea\label{eq_lmi}
\ve(k) \preccurlyeq M\ve(k)+c
\ena
where $\preccurlyeq$ is the element-wise inequality and $M$ is a nonnegative matrix 
\bee
M=
\begin{bmatrix}
	0 & 0      & 0  &\frac{4\widetilde{n}\widetilde{t}\bar{\gamma}}{\theta^2} \\
	\frac{8\beta\widetilde n\sqrt{\widetilde n}\widetilde t}{\theta^2 \sqrt{n}} & 0      & 0   & \frac{8\beta\widetilde n\sqrt{\widetilde n}\widetilde t\bar{\gamma}}{\theta^2 n} \\
	\frac{48bn\beta}{\alpha\theta\sqrt{\bar{\gamma} \theta n}}         & \frac{16b}{\alpha\theta\sqrt{\bar{\gamma} \theta n}} & 0 & 0\\
	2n(\sqrt{n}+1)\beta\sqrt{b}  & n  &\frac{2n\beta\sqrt{2nb}}{\sqrt{\alpha}} &0
\end{bmatrix}
\ene

It follows from \eqref{eq_lmi} that if the spectral radius $\varrho(M)$ of $M$  is strictly less than 1, then $\|\widetilde{X}\|_\sQ^{\lambda,k}, \|Y_\sV\|_\sS^{\lambda,k}$ and $\widetilde f^{\lambda,k}$ are all bounded for all $k\in\bN$. 

Define a transformation matrix $T=\diag([1,1,\sqrt{\bar{\gamma}},\sqrt{\bar{\gamma}}])$. Then, we can choose a small $\bar{\gamma}$ to make $TMT^{-1}$ arbitrarily close to a strictly lower triangular matrix, and hence $\varrho(M)=\varrho(TMT^{-1})<1$ for a sufficiently small $\bar{\gamma}$ since the eigenvalues of a matrix are continuous functions on its elements. In fact, an upper bound of $\bar{\gamma}$ can be obtained by bounding $\|M^3\|_\infty$, which however can be conservative and thus we omit it here.

Define $\bar c:=\frac{\max\{c_1,c_2,c_3\}}{1-\varrho(M)}<\infty$, where $c_1,c_2$ and $c_3$ are given in \eqref{eq_c1}, \eqref{eq_c2} and \eqref{eq_c3}, respectively. It follows from \eqref{eq_lmi} that $\|\tilX\|_\sQ^{\lambda,k}\leq\bar c$ and $\widetilde f^{\lambda,k}\leq\bar c,\forall k$. We have
\bea
&\|\vx_i(k)-\text{Proj}_{\cX^\star}(\vx_{\pi}(k-1))\|_2\\
&\leq \|\vx_i(k)-\vx_{\pi}(k-1)\|_2+\|\vx_{\pi}(k-1)-\text{Proj}_{\cX^\star}(\vx_{\pi}(k-1))\|_2\\
&\leq\|\widetilde X(k)\|_\sQ+\sqrt{2/\alpha}\sqrt{f(\vec x_{\pi}(k-1))-f^\star}
\ena 
where the last inequality used the equivalence between the Polyak-\L ojasiewicz condition and the quadratic growth condition \cite[Theorem 2]{karimi2016linear}, i.e., $f(\vec x)-f^\star\geq\frac{\alpha}{2}\|\vx-\text{Proj}_{\cX^\star}(\vx)\|_2^2,\forall \vx\in\bR^m$. Let $\vx^\star(k)=\text{Proj}_{\cX^\star}(\vx_{\pi}(k-1))$ and $\|\vx_i-\vx^\star\|_2^{\lambda,k}$ be the $\lambda$-sequence of $\{\|\vx_i(k)-\vx^\star(k)\}\|_2$. We have
\bea
\|\vx_i-\vx^\star\|^{\lambda,k}\leq\|\tilX\|_\sQ^{\lambda,k}+\sqrt{2/\alpha}\widetilde f^{\lambda,k}\leq (1+\sqrt{2/\alpha}) \bar c,\ \forall i.
\ena
Combined with the boundedness of $\|\widetilde{Y}\|_\sF^{\lambda,k}$, the result in Theorem 1 follows by the definition of $\lambda$-sequence.
\end{proof}

\subsection{Two useful propositions}

We establish two important results in this subsection. The first one shows a property of $\lambda$-sequence and the second one recalls the contraction relation of gradient methods.

\begin{prop}\label{lemma4}
	Let $\{p(k)\},\{q(k)\}$ be nonnegative sequences satisfying
	\bea\label{eq1_lemma4}
	p(t+j)\leq rp(t)+\sum_{i=0}^{j-1}q(t+i)
	\ena
	where $r\in[0,1)$. If we choose $\lambda$ such that $\lambda^j\in(r,1)$, then the $\lambda$-sequences $p^{\lambda,k}$ and $q^{\lambda,k}$ in \eqref{eq1_sec4} satisfy
	\bee
	p^{\lambda,k}\leq\frac{j}{\lambda^j-r}q^{\lambda,k}+c_{\lambda},\ \forall k\in\bN,
	\ene
	where $c_{\lambda}=\frac{\lambda^j}{\lambda^j-r}\sum_{t=1}^{m}\lambda^{-t}p(t)$ is a constant. In particular, by letting $\lambda^j=\frac{2r}{1+r}$, we have
	\bea\label{eq_lemma4}
	&p^{\lambda,k}\leq\frac{2j}{r(1-r)}q^{\lambda,k}+c_{\lambda},\ \forall k\in\bN,
	\ena
\end{prop}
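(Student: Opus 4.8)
The plan is to unfold the definition \eqref{eq1_sec4} of the $\lambda$-sequence, $p^{\lambda,k}=\sup_{t\le k}p(t)/\lambda^{t}$, and to split this supremum according to whether the recursion \eqref{eq1_lemma4} can be invoked at the index $t$. For the ``initial'' block $t\le j$ there is no recursion, so I would simply bound $p(t)/\lambda^{t}\le C_{0}:=\sum_{s=1}^{j}\lambda^{-s}p(s)$; this is legitimate by nonnegativity of all the summands and, crucially, $C_{0}$ does not depend on $k$. For $j<t\le k$ I set $t'=t-j\ge 1$, apply \eqref{eq1_lemma4} at $t'$, and divide through by $\lambda^{t}=\lambda^{t'+j}$ to obtain
\[
\frac{p(t)}{\lambda^{t}}\le\frac{r}{\lambda^{j}}\cdot\frac{p(t')}{\lambda^{t'}}+\frac{1}{\lambda^{t'+j}}\sum_{i=0}^{j-1}q(t'+i).
\]
Because $t'\le k$, the first ratio is at most $p^{\lambda,k}$; because each $t'+i\le t-1\le k$ and $\lambda\in(0,1)$, each summand satisfies $q(t'+i)\le\lambda^{t'+i}q^{\lambda,k}\le\lambda^{t'}q^{\lambda,k}$, so the sum contributes at most $(j/\lambda^{j})q^{\lambda,k}$. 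Hence $p(t)/\lambda^{t}\le(r/\lambda^{j})p^{\lambda,k}+(j/\lambda^{j})q^{\lambda,k}$ for every such $t$.

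Taking the supremum over $t\le k$ then gives $p^{\lambda,k}\le\max\{\,C_{0},\ (r/\lambda^{j})p^{\lambda,k}+(j/\lambda^{j})q^{\lambda,k}\,\}$, and I would finish with a two-case argument. If the maximum is $C_{0}$, then $p^{\lambda,k}\le C_{0}\le c_{\lambda}$ since $c_{\lambda}=\frac{\lambda^{j}}{\lambda^{j}-r}C_{0}\ge C_{0}$ (using $\lambda^{j}>r\ge0$), and the claimed inequality follows from $q^{\lambda,k}\ge 0$. Otherwise $p^{\lambda,k}\le(r/\lambda^{j})p^{\lambda,k}+(j/\lambda^{j})q^{\lambda,k}$, and the hypothesis $\lambda^{j}\in(r,1)$ is exactly what makes $1-r/\lambda^{j}>0$, so I may transpose the $p^{\lambda,k}$ term and divide, getting $p^{\lambda,k}\le\frac{j}{\lambda^{j}-r}q^{\lambda,k}$; in both cases $p^{\lambda,k}\le\frac{j}{\lambda^{j}-r}q^{\lambda,k}+c_{\lambda}$. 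For the ``in particular'' claim, substituting $\lambda^{j}=\frac{2r}{1+r}$ (which lies in $(r,1)$ whenever $r\in(0,1)$) gives $\lambda^{j}-r=\frac{r(1-r)}{1+r}$, so $\frac{j}{\lambda^{j}-r}=\frac{j(1+r)}{r(1-r)}\le\frac{2j}{r(1-r)}$ because $1+r<2$, which yields \eqref{eq_lemma4}.

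I do not expect a substantive obstacle; the proof is essentially careful bookkeeping. The two places that need attention are: (i) correctly isolating the finite initial block $t\le j$, whose ratios cannot be controlled by the recursion and must instead be absorbed into the $k$-independent constant $c_{\lambda}$ (in particular the sum written $\sum_{t=1}^{m}$ in the statement should be read as $\sum_{t=1}^{j}$); and (ii) keeping the strict inequality $\lambda^{j}>r$ in view throughout, since it is precisely the condition under which the self-referential estimate $p^{\lambda,k}\le(r/\lambda^{j})p^{\lambda,k}+\cdots$ can be solved for $p^{\lambda,k}$.
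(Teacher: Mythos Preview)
Your proposal is correct and follows essentially the same approach as the paper's proof: divide the recursion by $\lambda^{t}$, separate the initial block $t\le j$ (absorbed into the constant $c_\lambda$) from the range where \eqref{eq1_lemma4} applies, bound the $q$-terms via the definition of $q^{\lambda,k}$, and then solve the resulting self-referential inequality for $p^{\lambda,k}$ using $\lambda^{j}>r$. The only cosmetic difference is that the paper replaces your two-case max argument by the cruder estimate $\max\{a,b\}\le a+b$ and works with index $k+j$ rather than $k$; your observation that the upper limit ``$m$'' in the statement's definition of $c_\lambda$ should read ``$j$'' is also consistent with what the paper's proof actually uses.
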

\begin{proof}
	It follows from \eqref{eq1_lemma4} that
	\bea\label{eq2_lemma3}
	&\lambda^{-(t+j)}p(t+j)\\
	&\leq \frac{r}{\lambda^j}\lambda^{-t}p(t)+\sum_{i=0}^{j-1}\frac{1}{\lambda^{j-i}}\lambda^{-(t+i)}q(t+i),\forall t=1,\cdots,k.
	\ena
	This gives $k$ inequalities by selecting $t=1,\cdots,k$. On the other hand, we have
	$
	\lambda^{-t}p(t)\leq \lambda^{-t}p(t),
	$
	which gives another $j$ inequalities by selecting $t=1,\cdots,j$. Take the maximum on both sides of these $k+j$ inequalities and use the definition of $\lambda$-sequence, we obtain
	\bea\label{eq4_lemma3}
	p^{\lambda,k+j}&\leq\frac{r}{\lambda^j}p^{\lambda,k}+\max\left\{q^{\lambda,k}\sum_{i=0}^{j-1}\frac{1}{\lambda^{j-i}},\max_{t=1,\cdots,j}\lambda^{-t}p(t)\right\}\\
	&\leq\frac{r}{\lambda^j}p^{\lambda,k+j}+q^{\lambda,k+j}\sum_{i=0}^{j-1}\frac{1}{\lambda^{j-i}}+\sum_{t=1}^{j}\lambda^{-t}p(t)
	\ena
	If $\lambda^j\in(r,1)$, then \eqref{eq4_lemma3} implies
	\bea
	p^{\lambda,k+j}&\leq\frac{\lambda^j\sum_{i=0}^{j-1}\frac{1}{\lambda^{j-i}}}{\lambda^j-r}q^{\lambda,k+j}+\frac{\lambda^j}{\lambda^j-r}\sum_{t=1}^{j}\lambda^{-t}p(t)\\
	&=\frac{1-\lambda^j}{(\lambda^j-r)(1-\lambda)}q^{\lambda,k+j}+\frac{\lambda^j}{\lambda^j-r}\sum_{t=1}^{j}\lambda^{-t}p(t)\\
	&\leq\frac{j}{\lambda^j-r}q^{\lambda,k+j}+\frac{\lambda^j}{\lambda^j-r}\sum_{t=1}^{j}\lambda^{-t}p(t)
	\ena
	for all $k+j\in\bN$, where we have used that $1-\lambda^j\leq j(1-\lambda)$. The result is obtained immediately.
	For $\lambda^j\in(\frac{2r}{1+r},1)$, we have $\lambda^j-r\geq r(1-r)/2$, and hence \eqref{eq_lemma4} follows.
\end{proof}

We introduce another important property of $\lambda$-sequence. For any nonnegative sequences $\{p(k)\}$, $\{q(k)\}$, letting $r(k)=p(k)+q(k)$, it holds that $r^{\lambda,k}\leq p^{\lambda,k}+q^{\lambda,k}$, which can be easily checked by definition.

The following proposition shows the convergence rate of a perturbed gradient descent method for minimizing functions satisfying the PL condition. As a special case, it recovers the linear convergence rate of the standard gradient descent method.

\begin{prop}\label{lemma6}
	Suppose that $f$ is $\beta$-Lipschitz smooth and satisfies the Polyak-\L ojasiewicz condition in Assumption \ref{assum}(c). Let $\eta\in(0,\frac{1}{2\beta})$, $\sigma =1-\alpha\eta(1-2\eta\beta)<1$, and $\vx^+=\vx-\eta\nabla f(\vx)+\varepsilon$. Then
	\bee
	f(\vx^+)-f^\star\leq\sigma(f(\vx)-f^\star)+\Big(\frac{2}{\eta}+\beta\Big)\|\varepsilon\|_2^2, \ \forall \vx\in\bR^n.
	\ene
\end{prop}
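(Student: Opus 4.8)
The plan is to treat $\vx^+=\vx-\eta\nabla f(\vx)+\varepsilon$ as a single (perturbed) gradient step and combine the standard descent lemma with the PL inequality. First I would apply the descent lemma that follows from $\beta$-Lipschitz continuity of $\nabla f$, namely $f(\vx^+)\le f(\vx)+\langle\nabla f(\vx),\vx^+-\vx\rangle+\frac{\beta}{2}\|\vx^+-\vx\|_2^2$. Substituting $\vx^+-\vx=-\eta\nabla f(\vx)+\varepsilon$ and expanding the square, the right-hand side becomes $f(\vx)-\eta\big(1-\frac{\beta\eta}{2}\big)\|\nabla f(\vx)\|_2^2+(1-\beta\eta)\langle\nabla f(\vx),\varepsilon\rangle+\frac{\beta}{2}\|\varepsilon\|_2^2$, where the coefficient $1-\beta\eta$ of the cross term lies in $(\frac12,1)$ since $\eta<\frac{1}{2\beta}$.

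Next I would eliminate the cross term with Young's inequality: since $0<1-\beta\eta<1$, one has $(1-\beta\eta)\langle\nabla f(\vx),\varepsilon\rangle\le|\langle\nabla f(\vx),\varepsilon\rangle|\le\frac{\eta}{8}\|\nabla f(\vx)\|_2^2+\frac{2}{\eta}\|\varepsilon\|_2^2$. The point of this particular split is quantitative: the extra $\frac{\eta}{8}\|\nabla f(\vx)\|_2^2$ it creates is more than offset by the already-present $-\eta\big(1-\frac{\beta\eta}{2}\big)\|\nabla f(\vx)\|_2^2$, so that (using $\eta<\frac{1}{2\beta}$ once more) the net coefficient of $\|\nabla f(\vx)\|_2^2$ is at most $-\frac{\eta}{2}(1-2\beta\eta)<0$, while the total coefficient of $\|\varepsilon\|_2^2$ is $\frac{2}{\eta}+\frac{\beta}{2}\le\frac{2}{\eta}+\beta$. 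Then I would invoke the PL condition $\|\nabla f(\vx)\|_2^2\ge2\alpha(f(\vx)-f^\star)$ to replace $-\frac{\eta}{2}(1-2\beta\eta)\|\nabla f(\vx)\|_2^2$ by $-\alpha\eta(1-2\beta\eta)(f(\vx)-f^\star)$, and subtract $f^\star$ from both sides; this yields $f(\vx^+)-f^\star\le\sigma(f(\vx)-f^\star)+\big(\frac{2}{\eta}+\beta\big)\|\varepsilon\|_2^2$ with $\sigma=1-\alpha\eta(1-2\beta\eta)$, and $\sigma<1$ because $\alpha\eta(1-2\beta\eta)>0$ for $\eta\in(0,\frac{1}{2\beta})$.

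There is essentially no obstacle here; the entire argument is a one-page computation. The only step requiring a little care is the choice of the Young split, which must simultaneously (i) leave enough negative $\|\nabla f(\vx)\|_2^2$ mass to run the PL step at the \emph{stated} rate $\sigma$, and (ii) keep the resulting $\|\varepsilon\|_2^2$ coefficient at or below $\frac{2}{\eta}+\beta$; any split meeting (i)--(ii) works, so the constants in the statement are deliberately not tight. Setting $\varepsilon=\bzero$ recovers the classical linear rate $f(\vx^+)-f^\star\le\sigma(f(\vx)-f^\star)$ of gradient descent under the PL condition, which is the special case noted in the statement.
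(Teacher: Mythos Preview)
Your proof is correct and follows essentially the same route as the paper's: apply the descent lemma, absorb the cross term via a Young-type split so that the net coefficient of $\|\nabla f(\vx)\|_2^2$ is at most $-\tfrac{\eta}{2}(1-2\eta\beta)$, then invoke the PL inequality. The only cosmetic difference is that the paper bounds $\tfrac{\beta}{2}\|{-}\eta\nabla f(\vx)+\varepsilon\|_2^2\le \beta\eta^2\|\nabla f(\vx)\|_2^2+\beta\|\varepsilon\|_2^2$ (via $\|a+b\|^2\le 2\|a\|^2+2\|b\|^2$) and then uses the split $\nabla f(\vx)^\T\varepsilon\le \tfrac{\eta}{2}\|\nabla f(\vx)\|_2^2+\tfrac{1}{2\eta}\|\varepsilon\|_2^2$, whereas you expand the square exactly and use an $\tfrac{\eta}{8}$/$\tfrac{2}{\eta}$ split; both arrive at the same constants.
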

\begin{proof}
	It follows from the $\beta$-Lipschitz smoothness that
	\bea 
	&f(\vx^+)\\
	&\leq f(\vx)+\nabla f(\vx)^\T(-\eta\nabla f(\vx)+\varepsilon)+\frac{\beta}{2}\|-\eta\nabla f(\vx)+\varepsilon\|_2^2\\
	&\leq f(\vx)-\eta\|\nabla f(\vx)\|_2^2+\frac{\eta}{2}\|\nabla f(\vx)\|_2^2+\frac{2}{\eta}\|\varepsilon\|_2^2\\
	&\quad+\beta\eta^2\|\nabla f(\vx)\|_2^2+\beta\|\varepsilon\|_2^2\\
	&\leq f(\vx)-\eta(\frac{1}{2}-\eta\beta)\|\nabla f(\vx)\|_2^2+\Big(\frac{2}{\eta}+\beta\Big)\|\varepsilon\|_2^2
	\ena 
	Applying the Polyak-\L ojasiewicz inequality on $\|\nabla f(\vx)\|_2^2$ implies the desired result:
	\bea 
	&f(\vx^+)-f^\star\\
	&\leq f(\vx)-f^\star - 2\alpha\eta(\frac{1}{2}-\eta\beta)(f(\vx)-f^\star)+\Big(\frac{2}{\eta}+\beta\Big)\|\varepsilon\|_2^2\\
	&\leq \big(1-\alpha\eta(1-2\eta\beta)\big)(f(\vx)-f^\star)+\Big(\frac{2}{\eta}+\beta\Big)\|\varepsilon\|_2^2.
	\ena 
\end{proof}

We end this subsection with two inequalities which will be frequently used later. For any $A,B\in\bR^{n\times n}$,
\bea\label{ieq}
\|AB\|_\sF&\leq\|A\|_2\|B\|_\sF,\ \|A\|_2\leq\sqrt{\|A\|_\infty\|A\|_1},
\ena
and $\|A\|_2\leq\|A\|_\sF\leq\sqrt{n}$ for any row-stochastic matrix $A$.

\subsection{Proof of Lemma \ref{lemma_m1}}

Let $\widetilde t$ be defined as in Lemma \ref{lemma3}. It follows from \eqref{eq1_sec2} and \eqref{eq2_sec2} that
\bea\label{eq2_s1}
&\|\widetilde{X}(k+\widetilde t)\|_\sQ=\|Q(k+\widetilde t)\widetilde{X}(k+\widetilde t)\|_\sF\\
&\leq\|Q(k+\widetilde t)\Phi_{\widetilde{t}}^\sA (k)\widetilde{X}(k)\|_\sF\\
&\quad+\sum_{t=0}^{\widetilde t-1}\|Q(k+\widetilde{t})\Phi_{\widetilde{t}-t}^\sA (k+t)\Gamma I_{k+t}^a\widetilde{Y}(k+t)\|_\sF
\ena
By the stochastic vector $\phi_{\widetilde t}^\sA (k)$ in Lemma \ref{lemma3}, \eqref{eq2_s1} implies that
\bea\label{eq1_s1}
&\|\widetilde{X}(k+\widetilde t)\|_\sQ\\
&\leq\|Q(k+\widetilde t)(\Phi_{\widetilde{t}}^\sA (k)- \bone\phi_{\widetilde t-1}^\sA (k)^\T)Q(k)\widetilde{X}(k)\|_\sF\\
&\quad+\bar{\gamma}\sum_{t=0}^{\widetilde{t}-1}\|Q(k+\widetilde t)\|_2\|\Phi_{\widetilde{t}-t}^\sA (k+t)\|_2 \|I_{k+t}^a\widetilde{Y}(k+t)\|_\sF\\
%%%%%%%%%%%%%%%%%%%%%%%%%%%%%%%%%%%%%%%%%%%%%%%%%%%%%%%%%%%%
&\leq\|Q(k+\widetilde t)\|_2\|\Phi_{\widetilde{t}}^\sA (k)- \bone\phi_{\widetilde t}^\sA (k)^\T\|_\sF\|Q(k)\widetilde{X}(k)\|_\sF\\
&\quad+2\bar{\gamma} \sum_{t=0}^{\widetilde{t}-1}\|\Phi_{\widetilde{t}-t}^\sA (k+t)\|_2\|\widetilde{Y}(k+t)\|_\sF\\
%%%%%%%%%%%%%%%%%%%%%%%%%%%%%%%%%%%%%%%%%%%%%%%%%%%%%%%%%%%%
&\leq \sqrt{\widetilde n}\mu \|\widetilde{X}(k)\|_\sQ+2\bar{\gamma} \sqrt{\widetilde n}\sum_{t=0}^{\widetilde{t}-1}\|\widetilde{Y}(k+t)\|_\sF\\
\ena
where $\mu=\frac{\theta^2}{2\widetilde n}<1$ is given in Lemma \ref{lemma3}, the first inequality used the fact that $\|Q(k)\|_2\leq \sqrt{\widetilde n},\forall k$, $\Phi_{\widetilde{t}}^\sA (k)$ is row-stochastic and
\bea\label{eq4_s1}
&Q(k+t)(A-\bone\pi(k-1)^\T)Q(k)\\
&=Q(k+t)AQ(k)-Q(k+t)\bone\pi(k-1)^\T Q(k)\\
&=Q(k+t)(A-\bone_{\widetilde n}\pi(k-1)^\T)\\
&\quad-Q(k+t)(\bone_{\widetilde n}\pi(k-1)^\T-\bone_{\widetilde n}\pi(k-1)^\T)\\
&=Q(k+t)A-Q(k+t)\bone_{\widetilde n}\pi(k-1)^\T=Q(k+t)A
\ena
for any row-stochastic matrix $A$ and $k\in\bN$. The last inequality of \eqref{eq1_s1} follows from Lemma \ref{lemma3} and $\|\Phi_{\widetilde{t}}^\sA (k)\|_2\leq \sqrt{\widetilde n}$.

Note that $\mu<1/2$. In view of Proposition \ref{lemma4} and \eqref{eq1_s1}, we obtain that
\bee
\|\widetilde{X}\|_\sQ^{\lambda,k}\leq\frac{2\bar{\gamma}\sqrt{\widetilde n}\widetilde{t}}{\sqrt{\widetilde n}\mu(1-\sqrt{\widetilde n}\mu) }\|\widetilde{Y}\|_\sQ^{\lambda,k}+c_1
\ene
for any $\lambda^{\widetilde t}>\frac{2\sqrt{\widetilde n}\mu}{1+\sqrt{\widetilde n}\mu}$, where
\bee\label{eq_c1}
c_1=\frac{2\bar{\gamma}\sqrt{\widetilde n}\lambda^{\widetilde{t}}}{\lambda^{\widetilde{t}}-\sqrt{\widetilde n}\mu}\sum_{t=1}^{\widetilde{t}}\lambda^{-t}\|\widetilde{X}(t)\|_\sQ.
\ene
The desired result is obtained by setting $\mu=\frac{\theta^2}{2\widetilde n}$.
\qed

\subsection{Proof of Lemma \ref{lemma_m2}}

Let $\vv(k)=[v_1(k),\cdots,v_{\widetilde n}(k)]^\sT$ and $\cI_\sV(k)=\{i|v_i(k)=[V(k)]_{ii}=0\}$. Note that $v_i(k)\geq n\theta$ for all $i\in\cV,k\in\bN$ from Lemma \ref{lemma2}(b). It can be shown that the $i$-th row of $\tilY(k)$ is $\bzero_{m}^\T$ for all $i\in\cI_\sV(k)$, and thus $\tilY(k)=V(k)Y_\sV(k)$.

Let $R(k)=\nabla(k+1)-\nabla(k)$. It then follows from \eqref{eq1_sec2} that
\bee\label{eq2_s2}
Y_\sV(k+1)=\widetilde{B}_\sV(k)Y_\sV(k)+V(k+1)^{\dag}R(k)
\ene
where $\widetilde{B}_\sV(k)=V(k+1)^{\dag}\widetilde{B}(k)V(k)$ and one can prove that each row except the $i$-th ($i\in\cI_\sV(k)$) row of $\widetilde{B}_\sV(k)$ has row sum 1 using similar arguments as in Lemma 4 of \cite{nedic2016stochastic} and \cite{nedic2017achieving}. Using the definition of $\nabla(k)$ in \eqref{eq3_sec2} and Assumption \ref{assum}, we have
\bea\label{eq5_s2}
\|R(k)\|_\sF&= \|\nabla(k+1)-\nabla(k)\|_\sF\leq\beta\|{X}(k+1)-{X}(k)\|_\sF\\
&\leq\beta\|\widetilde{X}(k+1)-\widetilde{X}(k)\|_\sF
\ena
where we have used Assumption \ref{assum}.

Notice that
\bea\label{eq1_s2}
\|\widetilde{X}(k+1)-&\widetilde{X}(k)\|_\sF= \|\widetilde{A}(k)\widetilde{X}(k)-\Gamma I_k^a \widetilde{Y}(k)-\widetilde{X}(k)\|_\sF\\
&\leq \|(\widetilde{A}(k)-I)Q(k)\widetilde{X}(k)\|_\sF+\bar{\gamma}\|\widetilde{Y}(k)\|_\sF\\
&\leq2\sqrt{n}\|\widetilde{X}(k)\|_\sQ+\bar{\gamma}\|\widetilde{Y}(k)\|_\sF\\
\ena
where the second inequality follows from the row-stochasticity of $\widetilde{A}(k)$, and the last inequality is from that $\|\widetilde{A}(k)-I\|_2\leq\|\widetilde{A}(k)\|_2+\|I\|_2\leq \sqrt{\|\tilA(k)\|_1}+1\leq\sqrt{n}+1\leq 2\sqrt{n}$.
By Combining \eqref{eq5_s2} and \eqref{eq1_s2}, we obtain
\bee\label{eq3_s2}
\|R(k)\|_\sF\leq2\beta\sqrt{n}\|\widetilde{X}(k)\|_\sQ+\beta\bar{\gamma}\|\widetilde{Y}(k)\|_\sF
\ene

To analyze the sequence $\{Y_\sV(k)\}$, we define
\bea\label{eq6_s2}
\widetilde \Phi_t (k)&:=\widetilde B_\sV(k+t-1)\widetilde B_\sV(k+t-2)\cdots\widetilde B_\sV(k+1)\widetilde B_\sV(k)\\
&=V(k+t)^{\dag}\left(\prod_{l=t-1}^{1}\tilB(k+l)\tilI(k+l)\right)\tilB(k)V(k)\\
&=V(k+t)^{\dag}\Phi_t^\sB (k)V(k).
\ena
where
\bee
[\tilI(k)]_{ij}=[V(k)V(k)^{\dag}]_{ij}=\left\{
\begin{array}{ll}
	1, & \text{if $i=j,i\notin\cI_\sV(k)$}, \\
	0, & \text{otherwise.}
\end{array}\right.
\ene
and the last equality follows from that
$
\tilI(k+1)\tilB(k)V(k)=\tilB(k)V(k),\ \forall k\in\bN,
$
where we used the fact that $[\tilB(k)V(k)\bone_{\widetilde n}]=v_i(k+1)=0$ for any $i\in\cI_\sV(k+1)$, and thus the $i$-th row of $\tilB(k)V(k)$ is $\bzero_{m}^\T$.

We know from Lemma \ref{lemma2} and Lemma \ref{lemma3} that $\Phi_t^\sB (k)$ can be written as
\bee\label{eq8_s2}
\Phi_t^\sB (k) = \phi_{t}^\sB (k)\bone^\T + \Delta \Phi_{t}(k)
\ene
where $\|\Delta \Phi_{t}(k)\|_\sF\leq 2\rho^t$ and $\|\Delta \Phi_{\widetilde t}(k)\|_\sF\leq \mu<1$. Hence,
\bea
&\vv(k+t)=\Phi_{t}^\sB (k)\vv(k)=\phi_{t}^\sB (k)\bone^\T\vv(k)+\Delta \Phi_{t}(k)\vv(k)\\
&=\phi_{t}^\sB (k)\bone^\T\vv(0)+\Delta \Phi_{t}(k)\vv(k)=n\phi_{t}^\sB (k)+\Delta \Phi_{t}(k)\vv(k)
\ena
which implies that
\bee\label{eq7_s2}
\phi_{t}^\sB (k) = \frac{1}{n}\left(\vv(k+t)-\Delta \Phi_{t}(k)\vv(k)\right).
\ene
It then follows from \eqref{eq6_s2}, \eqref{eq8_s2} and \eqref{eq7_s2} that
\bea
\widetilde \Phi_t (k)&=\frac{1}{n}V(k+t)^{\dag}\vv(k+t)\bone_{\widetilde n}^\T V(k)\\
&\quad+V(k+t)^{\dag}\Delta \Phi_{t}(k)(I-\frac{1}{n}\vv(k)\bone_{\widetilde n}^\T)V(k)\\
&=\frac{1}{n}\widetilde \bone(k+t)\vv(k)^\T+C_{t}(k)
\ena
where $\widetilde \bone(k)=\tilI(k)\bone$, $C_{t}(k)=V(k+t)^{\dag}\Delta \Phi_{t}(k)(I-\frac{1}{\widetilde n}\vv(k)\bone_{\widetilde n}^\T)V(k)$ and
\bea
&\|C_{t}(k)\|_\sF\leq \|V(k+t)^{\dag}\|_2\|\Delta \Phi_{t}(k)\|_\sF\|(I-\frac{1}{ n}\vv(k)\bone_{\widetilde n}^\T)V(k)\|_\sF\\
&<\theta^{-1}\mu n.
\ena
where we used the fact that all entries of $V(k)^{\dag}$ are less than $\theta^{-1}$ by Lemma \ref{lemma2}. Thus
\bee
\left\|\widetilde \Phi_t (k)-\frac{1}{\widetilde n}\widetilde \bone(k+t)\vv(k)^\T\right\|_\sF\leq \theta^{-1}\mu {n}\leq\frac{\theta}{2}<\frac{1}{2}.
\ene

Now we turn to the sequence $\{Y_\sV(k)\}$. It follows from \eqref{eq1_c2} and \eqref{eq2_s2} that
\bea\label{eq9_s2}
&\|Y_\sV(k+\widetilde t)\|_\sS=\|S(k+\widetilde t)Y_\sV(k+\widetilde t)\|_\sF\\
%%%%%%%%%%%%%%%%%%%%%%%%%%%%%%%%%%%%%%%%%%%%%%%%%%%%%%%%%%%%
&\leq\|S(k+\widetilde t)\widetilde \Phi_t (k)Y_\sV(k)\|_\sF\\
&\quad+\sum_{t=1}^{\widetilde{t}}\|S(k+t)\widetilde \Phi_{\widetilde t-t} (k+t)V(k+t)^{\dag}R(k+t-1)\|_\sF\\
%%%%%%%%%%%%%%%%%%%%%%%%%%%%%%%%%%%%%%%%%%%%%%%%%%%%%%%%%%%%
&\leq\|S(k+\widetilde t)\widetilde \Phi_t (k)Y_\sV(k)\|_\sF\\
&\quad+\sum_{t=1}^{\widetilde{t}}\|S(k+t)V(k+t+1)^{\dag}\Phi_{\widetilde t-t}^\sB (k+t)R(k+t-1)\|_\sF
\ena
Similar to \eqref{eq4_s1}, we have
$
S(k+\widetilde t)\widetilde \Phi_t (k)=S(k+\widetilde t)(\widetilde \Phi_t (k)-\frac{1}{n}\widetilde \bone(k+t)\vv(k)^\T)S(k).
$
Following a similar argument as in \eqref{eq1_s1} and using \eqref{eq3_s2}, equation \eqref{eq9_s2} implies that
\bea
&\|Y_\sV(k+\widetilde t)\|_\sS\\
&\leq2\theta^{-1}\mu {n} \|Y_\sV(k)\|_\sS+2\sqrt{\widetilde n}\sum_{t=0}^{\widetilde{t}-1}\|V(k+t+2)^{\dag}R(k+t)\|_\sF\\
%%%%%%%%%%%%%%%%%%%%%%%%%%%%%%%%%%%%%%%%%%%%%%%%%%%%%%%%%%%%
&\leq2\theta^{-1}\mu {n} \|Y_\sV(k)\|_\sS\\
&\quad+2\beta  \theta^{-1}\sqrt{\widetilde n}\sum_{t=0}^{\widetilde{t}-1}\left(2\sqrt{n}\|\widetilde{X}(k+t)\|_\sQ+\bar{\gamma}\|\widetilde{Y}(k+t)\|_\sF\right)\\
\ena
where we used the relation $\|V(k)^{\dag}\|_2\leq\theta^{-1},\forall k$ and \eqref{eq3_s2}.

By Proposition \ref{lemma4}, we have for any $\lambda^{\widetilde t}>(1+2\theta^{-1}\mu{n})/2$ that
\bea\label{eq10_s2}
&\|Y_\sV\|_\sS^{\lambda,k}\\
&\leq\frac{2\beta  \theta^{-1}n(1-\lambda^{\widetilde{t}})}{(\lambda^{\widetilde{t}}-2\theta^{-1}\mu{n} )(1-\lambda)}\|\widetilde{X}\|_\sQ^{\lambda,k}\\
&\quad+\frac{2\bar{\gamma}\beta \theta^{-1}\sqrt{n}(1-\lambda^{\widetilde{t}})}{(\lambda^{\widetilde{t}}-2\theta^{-1}\mu{n} )(1-\lambda)}\|\widetilde{Y}\|_\sF^{\lambda,k}+c'_2\\
%%%%%%%%%%%%%%%%%%%%%%%%%%%%%%%%%%%%%%%%%%%%%%%%%%%%%%%%%%%%
&\leq\frac{2\beta \theta^{-1}\sqrt{\widetilde n}\widetilde t}{\theta^{-1}\mu{n}(1-2\theta^{-1}\mu{n}) }\left(\sqrt{n}\|\widetilde{X}\|_\sQ^{\lambda,k}+\bar{\gamma}\|\widetilde{Y}\|_\sF^{\lambda,k}\right)+c'_2
\ena
where
\bee\label{eq_c2}
c_2=\frac{2\beta  \theta^{-1}\sqrt{\widetilde n}\lambda^{\widetilde{t}}}{\lambda^{\widetilde{t}}-2\theta^{-1}\mu{n}}\sum_{t=1}^{\widetilde{t}}\lambda^{-t}\left(\sqrt{n}\|\widetilde{X}(t)\|_\sQ+\bar{\gamma}\|\widetilde{X}(t)\|_\sQ\right)
\ene
The desired result is obtained by setting $\mu=\frac{\theta^2}{2\widetilde n}$.
\qed

\subsection{Proof of Lemma \ref{lemma_m3}}

It follows from \eqref{eq1_sec2}, \eqref{abseq} and \eqref{eq9_s3} that
\bea\label{eq1_s3}
&\vec x_{\pi}(k+ b)=\vec \pi(k+b)^\T\widetilde X(k+ b)\\
%%%%%%%%%%%%%%%%%%%%%%%%%%%%%%%%%%%%%%%%%%%%%%%%%%%%%%%%%%%%
&=\vec \pi(k+b)^\T\Phi_{ b}^\sA(k)\tilX(k)\\
&\quad- \vec \pi(k+b)^\T \sum_{t=0}^{ b-1}\Phi_{ b-t}^\sA(k+t+1)I_{k+t}^a\Gamma V(k+t)Y_\sV(k+t)\\
%%%%%%%%%%%%%%%%%%%%%%%%%%%%%%%%%%%%%%%%%%%%%%%%%%%%%%%%%%%%
&=\vec \pi(k)^\T\tilX(k)-\frac{1}{n}\sum_{t=0}^{b-1}\eta_k(t)\bone_{\widetilde n}^\T \nabla(k+t)\\
&\quad-\sum_{t=0}^{b-1}\vr_k(t)(Y_\sV(k+t)-\frac{1}{n}\widetilde\bone(k+t)\bone_{\widetilde n}^\T \nabla(k+t))\\
\ena
where $V(k)$ and $\widetilde\bone(k)$ are defined in Lemma \ref{lemma_m2}, $\vr_k(t)=\vec \pi(k+b)^\T\Phi_{ b-t}^\sA(k+t+1)I_{k+t}^a\Gamma V(k+t)$ and $\eta_k(t)=\vr_k(t)\widetilde\bone(k+t)$. We have
\begin{subequations}\label{eq11_s3}
	\noeqref{eq11a_s3,eq11b_s3,eq11c_s3}
\begin{align}
\eta_k(t)&\leq \eta:={\bar{\gamma} n},\ \forall k,t\in\bN,\label{eq11a_s3}\\
\sum_{t=0}^{ b-1}\eta_k(t)&\geq  \vec \pi(k+b)^\T\sum_{t=0}^{b-1}\Phi_{ b-t}^\sA(k+t+1)I_{k+t}^a\Gamma\vv(k+t)\\
&\geq {\underline\gamma\theta n},\label{eq11b_s3}\\
\sum_{t=0}^{ b-1}\eta_k(t)&\leq b\eta\leq{\bar{\gamma}}nb<\frac{1}{4\beta},\ \forall k.\label{eq11c_s3}
\end{align}
\end{subequations}
where we used the relation $v_i(k)\geq n\theta,\forall i\in\cV,k\in\bN$ from Lemma \ref{lemma2}(b) and the fact that the sum of any row of $\sum_{t=0}^{b-1}\Phi_{ b-t}^\sA(k+t+1)I_{k+t}^a$ is not smaller than 1 to obtain \eqref{eq11b_s3}.

By introducing an auxiliary term $\sum_{t=0}^{ b-1}\eta_k(t)\nabla f(\vx_{\pi}(k))^\T$, Eq. \eqref{eq1_s3} becomes
\bea\label{eq8_s3}
&\vec x_{\pi}(k+ b)\\
&=\vec x_{\pi}(k)-\sum_{t=0}^{ b-1}\eta_k(t)\nabla f(\vx_{\pi}(k))^\T\\
&\quad+\underbrace{\sum_{t=0}^{ b-1}\eta_k(t)\left(\nabla f(\vx_{\pi}(k))^\T-\frac{1}{n}\bone_{\widetilde n}^\T \nabla(k+t)\right)}_{\vh(k)}\\
&\quad-\sum_{t=0}^{b-1}\vr_k(t)\left(Y_\sV(k+t)-\frac{1}{n}\widetilde\bone(k+t)\vv(k+t)^\T Y_\sV(k+t)\right)\\
\ena
where we have used the relation  $\bone_{\widetilde n}^\T\nabla(k)=\bone_{\widetilde n}^\T\tilY(k)=\vv(k)^\T Y_\sV(k)$ in \eqref{eq4_sec2}.

We now bound $\vh(k)$ in \eqref{eq8_s3}. Recall that $\nabla f(\vx)=\bone_n^\T\vec{\nabla f}(\bone_n\vx^\T),\forall \vx\in\bR^m$ and $\bone_{\widetilde n}^\T\nabla(k)=[\bone_{n};\bzero]^\T\nabla(k)=\bone_n^\T\nabla\vf(X(k)),\forall k$. Let $\widetilde\bone=[\bone_n;\bzero_{\widetilde{n}-n}]$, we have 
\begin{align}\label{eq4_s3}
&\|\vh(k)\|_\sF \\
&=\Big\|\sum_{t=0}^{ b-1}\eta_k(t)(\bone_n^\T\nabla \vf(\bone_n\vx_{\pi}(k))-\bone_{n}^\T\nabla\vf(X(k+t)))\Big\|_\sF\\
&\leq \sqrt{n}\beta\eta\sum_{t=0}^{ b-1}\|\bone_n\vx_{\pi}(k)-X(k+t)\|_\sF                                          \\
%%%%%%%%%%%%%%%%%%%%%%%%%%%%%%%%%%%%%%%%%%%%%%%%%%%%%%%%%%%%
& \leq \sqrt{n}\beta\eta\Big(\sum_{t=0}^{ b-1}\bigg\|\widetilde\bone\vx_{\pi}(k)^\T-\widetilde\bone\pi_{k+t-1}^\T\tilX(k+t)\bigg\|_\sF\Big. +\\
&\qquad\Big.\sum_{t=0}^{ b-1}\bigg\|\widetilde\bone\pi_{k+t-1}^\T\tilX(k+t)-\begin{bmatrix}
I_n    & \bzero \\
\bzero & \bzero
\end{bmatrix}\tilX(k+t)\bigg\|_\sF\Big)    \\
%%%%%%%%%%%%%%%%%%%%%%%%%%%%%%%%%%%%%%%%%%%%%%%%%%%%%%%%%%%%
& \leq \sqrt{n}\beta\eta\Big(\sum_{t=0}^{ b-1}\bigg\|\widetilde\bone({\pi}_{k-1}-\pi_{k+t-1})^\T Q(k+t)\tilX(k+t)\bigg\|_\sF\Big. \\
&\qquad+\Big.\sum_{t=0}^{ b-1}\bigg\|\begin{bmatrix}
I_n    & \bzero \\
\bzero & \bzero
\end{bmatrix}\Big(\bone_{\widetilde n}\pi_{k+t-1}^\T\tilX(k+t)-\tilX(k+t)\Big)\bigg\|_\sF\Big)                     \\
%%%%%%%%%%%%%%%%%%%%%%%%%%%%%%%%%%%%%%%%%%%%%%%%%%%%%%%%%%%%
& \leq 3n\beta\eta\sum_{t=0}^{b-1}\|\tilX(k+t)\|_\sQ\leq 3\bar{\gamma} n^2\beta\sum_{t=0}^{b-1}\|\tilX(k+t)\|_\sQ
\end{align}
where we defined ${\pi}_{k}:=\pi(k)$ to save space.

By combining \eqref{eq8_s3} and \eqref{eq4_s3}, we obtain
\bea
\vec x_{\pi}(k+ b)&=\vec x_{\pi}(k)-\sum_{t=0}^{ b-1}\eta_k(t)\nabla f(\vx_{\pi}(k))^\T+\vh(k)\\
&\quad-\sum_{t=0}^{b-1}\vr_k(t)S(k+t)Y_\sV(k+t).
\ena
Then, Proposition \ref{lemma6} implies that 
\bea 
&f(\vec x_{\pi}(k+ b))-f^\star\leq \sigma^2(f(\vec x_{\pi}(k))-f^\star)\\
&+\big(\frac{2}{\sum_{t=0}^{ b-1}\eta_k(t)}+\beta\big)\Big\|\vh(k)-\sum_{t=0}^{b-1}\vr_k(t)S(k+t)Y_\sV(k+t)\Big\|_2^2
\ena 
where $\sigma=\sqrt{1-\alpha\sum_{t=1}^{ b}\eta_k(t)(1-2\beta\sum_{t=1}^{ b}\eta_k(t))}$, and we have $\sigma\leq 1-\frac{1}{2}\alpha \underline\gamma\theta n(1-2\beta\bar{\gamma} b n)<1-\frac{1}{4}{\alpha \underline\gamma\theta}{n}<1$ using $\sqrt{1-a}\leq 1-\frac{1}{2}a,\forall a\in(0,1)$, $\bar{\gamma} n\beta b\leq 0.25$, and \eqref{eq11_s3}. Using $\sqrt{a+b}\leq\sqrt{a}+\sqrt{b}$, we obtain
\bea\label{eq2_s3}
&\widetilde f(k+b)\\
&=\sqrt{f(\vec x_{\pi}(k+ b))-f^\star}\leq \sigma\sqrt{f(\vec x_{\pi}(k))-f^\star}\\
&\ +\Big(\sqrt{\beta}+\frac{\sqrt{2}}{\sqrt{\bar{\gamma}\theta n}}\Big)\Big(\|\vh(k)\|_\sF+\sum_{t=0}^{b-1}\|\vr_k(t)\|_2\|Y_\sV(k+t)\|_\sS\Big)\\
&\leq \sigma\widetilde f(k)+\Big(\bar{\gamma} n\sqrt{\beta}+\frac{\sqrt{2\bar{\gamma} n}}{\sqrt{\theta}}\Big)3n\beta\sum_{t=0}^{b-1}\|\tilX(k+t)\|_\sQ\\
&\quad+\Big(\bar{\gamma} n\sqrt{\beta}+\frac{\sqrt{2\bar{\gamma} n}}{\sqrt{\theta}}\Big)\sum_{t=0}^{b-1}\|Y_\sV(k+t)\|_\sS\\
&\leq \frac{2\sqrt{\bar{\gamma} n}}{\sqrt{\theta}}3n\beta\sum_{t=0}^{b-1}\|\tilX(k+t)\|_\sQ+\frac{2\sqrt{\bar{\gamma} n}}{\sqrt{\theta}}\sum_{t=0}^{b-1}\|Y_\sV(k+t)\|_\sS\\
&\quad+\sigma\widetilde f(k)
\ena 
where the first two inequalities follow from \eqref{eq11_s3}, \eqref{eq4_s3}, and $\|\vr_k(t)\|_2\leq \bar{\gamma} n,\forall k,t$. The last inequality follows from $\sqrt{\bar{\gamma} n\beta\theta}\leq\sqrt{\bar{\gamma} n\beta}\leq 0.5$.

Use the condition that $\lambda^b>1-\frac{1}{8}{\alpha \underline\gamma\theta}{n}$, $\sigma<1-\frac{1}{4}{\alpha \underline\gamma\theta}{n}$ and Proposition \ref{lemma4}, we obtain from \eqref{eq2_s3} that
\bea\label{eq3_s3}
\widetilde f^{\lambda,k}&\leq \frac{2\sqrt{\bar{\gamma} n}b\left({3n\beta}\|\tilX\|_\sQ^{\lambda,k}+\|Y_\sV\|_\sS^{\lambda,k}\right)}{(\lambda^b-(1-\frac{1}{4}{\alpha \bar{\gamma}\theta}{n}))\sqrt{\theta}}+c_3\\
&\leq\frac{16b}{\alpha\theta\sqrt{\bar{\gamma} \theta n}}\left({3n\beta}\|\tilX\|_\sQ^{\lambda,k}+\|Y_\sV\|_\sS^{\lambda,k}\right)+c_3\\
\ena
where
\bee\label{eq_c3}
c_3=\frac{16}{\alpha\theta\sqrt{\bar{\gamma} \theta n}}\sum_{t=0}^{b-1}\lambda^{-t}\left(3n\beta\|\widetilde{X}(t)\|_\sQ+\|{Y}_\sV(t)\|_\sS\right).
\ene
\quad

\subsection{Proof of Lemma \ref{lemma_m4}}
Since
\bea
&\widetilde{Y}(k)=V(k)Y_\sV(k)\\
&=V(k)S(k)Y_\sV(k)+\frac{1}{n}V(k)\widetilde \bone(k)\vv(k)^\T Y_\sV(k)
\ena
we have
\bea\label{eq2_s4}
\|\widetilde{Y}(k)\|_\sF&\leq\|V(k)\|_2\Big(\|Y_\sV(k)\|_\sS+\Big\|\frac{1}{n}\widetilde \bone(k)\vv(k)^\T Y_\sV(k)\Big\|_\sF\Big)\\
%&\leq n\|Y_\sV(k)\|_\sS+n\Big\|\frac{1}{n}\widetilde \bone(k)\vv(k)^\T Y_\sV(k)\Big\|_\sF\\
&\leq n\|Y_\sV(k)\|_\sS+\Big\|\widetilde \bone(k)\bone_{\widetilde n}^\T\nabla(k)\Big\|_\sF.
\ena

Note that
\begin{align}\label{eq3_s4}
&\|\widetilde\bone(k)\bone_{\widetilde n}^\T \nabla(k)\|_\sF=\Big\|\widetilde\bone(k)\big(\bone_n^\T \nabla\vf(X(k))-\bone_n^\T\nabla \vf(\bone_n(\vx^\star)^\T)\big)\Big\|_\sF\\
%%%%%%%%%%%%%%%%%%%%%%%%%%%%%%%%%%%%%%%%%%%%%%%%%%%%%%%%%%%%
&\leq \Big\|\widetilde\bone(k)\bone_n^\T\Big\|_2 \Big\|\nabla\vf(X(k))-\nabla\vec{f}(\bone_n(\vx^\star)^\T)\Big\|_\sF\\
%%%%%%%%%%%%%%%%%%%%%%%%%%%%%%%%%%%%%%%%%%%%%%%%%%%%%%%%%%%%
&\leq n\beta\sqrt{b+1}\|{X}(k)-\bone_n(\vx^\star)^\T\|_\sF\\
&= n\beta\sqrt{b+1}\Big\|\begin{bmatrix}
	I_n    & \bzero \\
	\bzero & \bzero
\end{bmatrix}\widetilde{X}(k)-\begin{bmatrix}
	\bone_n \\
	\bzero
\end{bmatrix}(\vx^\star)^\T\Big\|_\sF\\
%%%%%%%%%%%%%%%%%%%%%%%%%%%%%%%%%%%%%%%%%%%%%%%%%%%%%%%%%%%%
&\leq 2n\beta\sqrt{b}\Big(\Big\|\begin{bmatrix}
	I_n    & \bzero \\
	\bzero & \bzero
\end{bmatrix}\widetilde{X}(k)-\begin{bmatrix}
	\bone_n \\
	\bzero
\end{bmatrix}\pi(k-1)^\T\widetilde{X}(k)\Big\|_\sF\\
&\quad+\Big\|\begin{bmatrix}
	\bone_n(\pi(k-1)^\T\widetilde{X}(k)-(\vx^\star)^\T) \\
	\bzero
\end{bmatrix}\Big\|_\sF\Big)\\
%%%%%%%%%%%%%%%%%%%%%%%%%%%%%%%%%%%%%%%%%%%%%%%%%%%%%%%%%%%%
&\leq 2n\beta\sqrt{b}\Big\|(\begin{bmatrix}
	I_n    & \bzero \\
	\bzero & \bzero
\end{bmatrix}-\begin{bmatrix}
	\bone_n \\
	\bzero
\end{bmatrix}\pi(k-1)^\T) Q\widetilde{X}(k)\Big\|_\sF\\
&\quad+2n\beta\sqrt{b}\Big\|\begin{bmatrix}
\bone_n(\vec{x}_{\pi}(k)-(\vx^\star)^\T) \\
\bzero
\end{bmatrix}\Big\|_\sF\\
%%%%%%%%%%%%%%%%%%%%%%%%%%%%%%%%%%%%%%%%%%%%%%%%%%%%%%%%%%%%
&\leq 2n\beta\sqrt{b}\Big((\sqrt{n}+1)\|\widetilde{X}(k)\|_\sQ+\sqrt{n}\|\vec{x}_{\pi}(k)-\vx^\star\|_\sF\Big)\\
&\leq 2n\beta\sqrt{b}\Big((\sqrt{n}+1)\|\widetilde{X}(k)\|_\sQ+\frac{\sqrt{2n}}{\sqrt{\alpha}}\sqrt{f(\vec x_{\pi}(k))-f^\star}\Big)
\end{align}
where $\vx^\star=\text{Proj}_{\cX^\star}(\vx_{\pi}(k))$ and the last inequality follows from that the Polyak-\L ojasiewicz condition implies the quadratic growth condition \cite[Theorem 2]{karimi2016linear}, i.e., $f(\vec x)-f^\star\geq\frac{\alpha}{2}\|\vx-\text{Proj}_{\cX^\star}(\vx)\|_2^2,\forall \vx$.

Substituting \eqref{eq3_s4} into \eqref{eq2_s4} yields
\bea
&\|\widetilde{Y}(k)\|_\sF\leq 2n(\sqrt{n}+1)\beta\sqrt{b}\|\widetilde{X}(k)\|_\sQ+n\|Y_\sV(k)\|_\sS\\
&+\frac{2n\beta\sqrt{2nb}}{\sqrt{\alpha}}\widetilde f(k).
\ena
The desired result is obtained by the definition of $\|\widetilde{Y}\|_\sF^{\lambda,k}$.
% \end{proof}
\qed

\section{Numerical Examples}\label{sec6}

We use APPG to train a multi-class logistic regression classifier in a distributed manner on the \emph{Covertype} dataset \cite{Dua2017UCI}, where the objective function takes the following form
\bee\label{lr}
f(X)= -\sum_{i=1}^{n_s}\sum_{j=1}^{n_c}l_j^i\log\left(\frac{\exp(\vec x_j^\T\vec s^i)}{\sum_{j'=1}^{n_c}\exp(\vec x_{j'}^\T\vec s^i)}\right)+\frac{\rho}{2}\|X\|_\sF^2.
\ene
Here $n_s=581012$ is the number of training instances, $n_c=7$ is the number of classes, $n_f=55$ is the number of features, $\vec s^i\in\bR^{55}$ is the feature vector of the $i$-th instance, $\vec{l}^i=[l_1^i,...,l_7^i]^\T$ is the label vector of the $i$-th instance using the one-hot encoding, $X=[\vec x_1,...,\vec x_7]\in\bR^{n_f\times n_c}$ is the parameters to be optimized, $\rho=20$ is a regularization factor.

{\bf Environment: }APPG is implemented in Python 3.6 with OpenMPI 1.10 on Ubuntu 14.04. The hardware is a server with 28 Xeon E5-2660 cores. Each core serves as a computing node.

{\bf Distributed Data: }We first normalize non-categorical features by subtracting the mean and dividing by the standard deviation in the whole dataset. Then, we \emph{sort} the data by digit label, and sequentially partition it into $n$ parts (with different sizes), where each node (core) only has \emph{exclusive} access to one part. Thus, we are dealing with distributed datasets. 

{\bf Topology: }The directed network among nodes is as follows: Each node $i$ sends messages to node $\text{mod}(2^j+i,n)$, where $j\in\bN\cap[0,\log_2(n))$ and $\text{mod}(a,b)$ returns the remainder after division of $a$ by $b$. Thus, each node has $\cO(\log(n))$ out-neighbors, which results in a relatively sparse directed networks. We also implement APPG over other networks in Section \ref{sec6b}. Note that gossip-based asynchronous algorithms generally cannot work over these directed networks.

{\bf Stepsize: } The stepsize of each algorithm is tuned via a grid search around $0.5/n_s$.

{\bf Local Termination Criteria: }Node $i$ stops locally if the value of $\vy_i$ in last $n$ consecutive iterations are less than $300/n_s$.

\subsection{Convergence performance and linear speedup}\label{sec6a}

We implement APPG over $n=1,6,12,18,24$ nodes ($n=1$ is used as a baseline since APPG reduces to standard centralized gradient descent method). The training loss w.r.t. running time is plotted in Fig. \ref{fig1a}, which validates the convergence of APPG, and shows that the training time is significantly reduced with the increase of number of nodes.

Fig. \ref{fig1c} plots the training loss of a synchronous version of APPG, which is done by adding a barrier after each update (c.f Section \ref{sec3.3}).  The result shows that its convergence rate is slower than APPG for $n>1$.

Fig. \ref{fig1b} depicts the training loss w.r.t. the number of iterations. We find that the number of iterations required to achieve the same accuracy is close to each other for different number of nodes. The time for a node to finish an iteration is proportional to the size of its local dataset, and hence is roughly inversely proportional to the number of nodes, which suggests that using $n$ nodes may reduce $\cO(n)$ times of training time than that of one node.

To further illustrate this property, we study the speedup of APPG defined as $S_n:=T_n/T_1$, where $T_n$ is the running time of the APPG with $n$ node(s) when the training loss decays to $0.005$. Fig. \ref{fig2} shows that the APPG achieves a roughly linear speedup in convergence rate w.r.t. the number of nodes. One can also find that the synchronous version of APPG has an approximately linear speedup when the number of cores is small, but it decreases fast when the number of cores is relatively large. 

Ideally, the speedup would be $n$ when using $n$ nodes. However, the communication among nodes introduces delays and staleness to the algorithm, which degrades the convergence rate. In practice, a higher speedup than Fig. \ref{fig2} can be achieved by using a low-latency network with larger bandwidth.

\begin{figure*}[!t]
	\centering
	\begin{subfigure}[c]{0.3\linewidth}
		\centering
        \includegraphics[width=\linewidth]{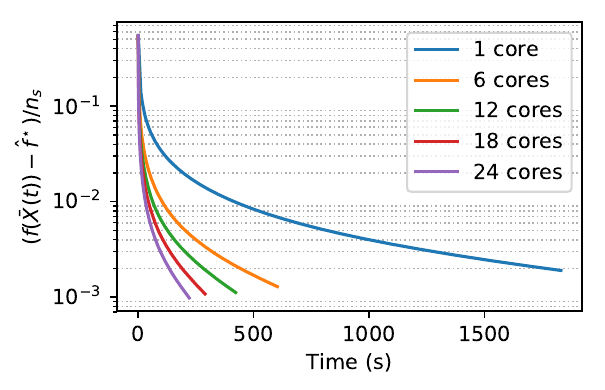}
        \caption{}\label{fig1a}
	\end{subfigure}
	\begin{subfigure}[c]{0.3\linewidth}
		\centering
        \includegraphics[width=\linewidth]{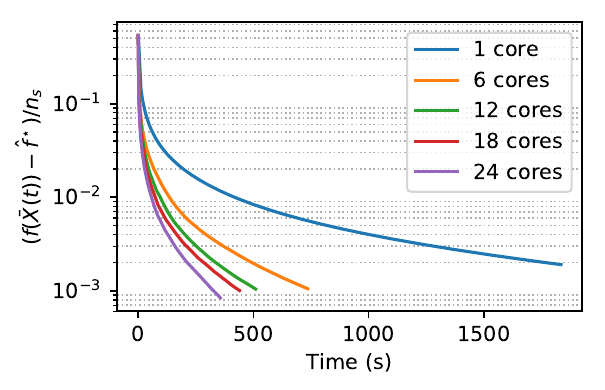}
        \caption{}\label{fig1c}
	\end{subfigure}
	\begin{subfigure}[c]{0.3\linewidth}
		\centering
        \includegraphics[width=\linewidth]{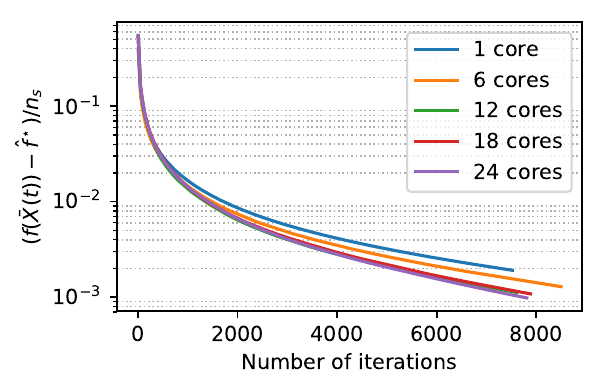}
        \caption{}\label{fig1b}
	\end{subfigure}
	\caption{Convergence performance with different number of nodes. (a) Training loss w.r.t. running time of APPG. (b) Training loss w.r.t. running time of the synchronous version of APPG. (c) Training loss w.r.t. number of iterations (epochs) of APPG.}
	\label{fig1}
\end{figure*}

\begin{figure}[!t]
	\centering
	\begin{subfigure}[c]{0.47\linewidth}
		\centering
        \includegraphics[width=\linewidth]{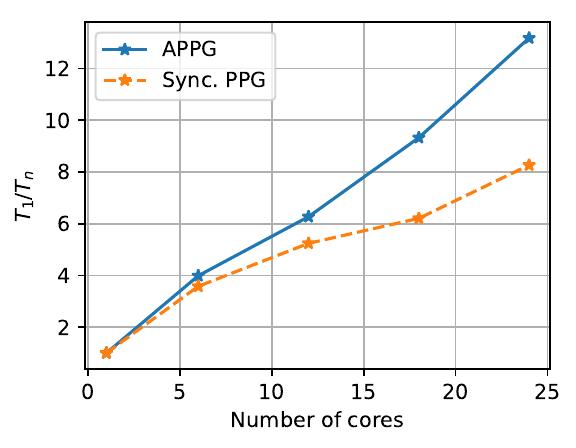}
        \caption{}\label{fig2}
	\end{subfigure}
	\begin{subfigure}[c]{0.5\linewidth}
		\centering
        \includegraphics[width=\linewidth]{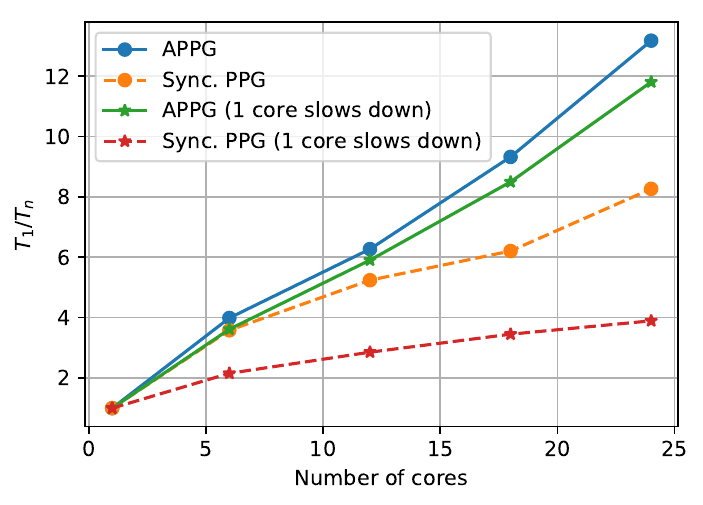}
        \caption{}\label{fig5}
	\end{subfigure}
	\caption{(a) Speedup in running time of APPG and the synchronous implementation of APPG w.r.t. the number of cores. $T_n$ is the running time of the APPG with $n$ core(s) when the training loss decays to $0.005$. (b) Speedup of APPG and the `synchronized' APPG when one core slows down.}
\end{figure}

\begin{figure}[!t]
	\centering
	\begin{subfigure}[c]{0.48\linewidth}
		\centering
        \includegraphics[width=\linewidth]{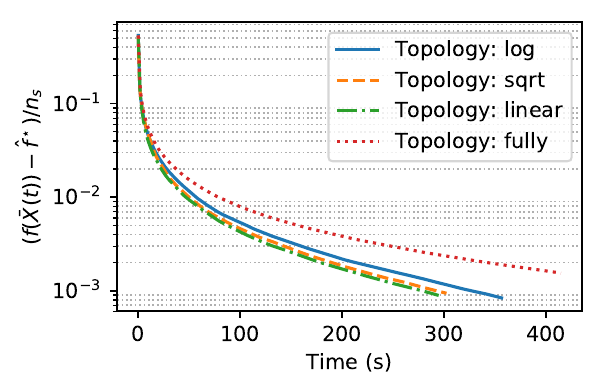}
        \caption{}\label{fig3a}
	\end{subfigure}
	\begin{subfigure}[c]{0.48\linewidth}
		\centering
        \includegraphics[width=\linewidth]{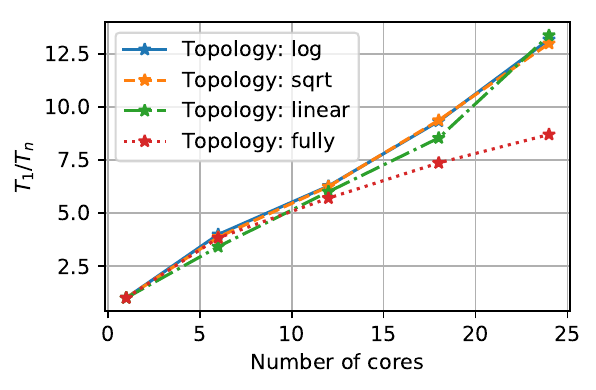}
        \caption{}\label{fig3b}
	\end{subfigure}
	\caption{(a) Convergence rate of APPG using 24 cores over different topologies. (b) Speedup in running time w.r.t. the number of cores over different topologies.}
	\label{fig3}
\end{figure}

\subsection{Effect of network topology}\label{sec6b}

The communication topology may largely affect the convergence performance of distributed algorithms, which is empirically studied in this subsection, where we test the APPG under the following directed graphs. 
\begin{enumerate}[label=(\alph*)]
	\item \texttt{log} topology (default): Node $i$ sends messages to node $\text{mod}(2^j+i,n)$, where $j\in\bN\cap[0,\log_2(n))$.
	\item \texttt{sqrt} topology: Node $i$ sends messages to node $\text{mod}(j^2+i+1,n)$, where $j\in\bN\cap[0,\sqrt{n})$.
	\item \texttt{linear} topology: Node $i$ sends messages to node $\text{mod}(5j+i+1,n)$, where $j\in\bN\cap[0,n/5)$.
	\item \texttt{fully} topology: Fully connected graph, a node sends information to all the rest nodes.
\end{enumerate}
The \texttt{log} topology has the  sparsest edges while the \texttt{fully} topology is the densest one.

Fig. \ref{fig3a} shows the convergence rate in running time of 24 cores over these topologies, and Fig. \ref{fig3b} depicts the speedup. For  \texttt{log}, \texttt{sqrt} or \texttt{linear} topologies, the convergence rate is slightly faster if the graph is denser, which is because a denser graph accelerates the information mixing speed. However, there is a sharp reduction in convergence rate  when the graph is too dense as in the \texttt{fully} topology. The reason is such a dense graph results in large amount of transmitted data per iteration, which heavily increases the communication overhead and the staleness in gradient computation. In practice, an appropriate topology should be designed according to the network bandwidth and latency.

\subsection{Robustness of APPG to slow cores}\label{sec6d}

We evaluate the robustness of APPG by forcing one core in the network to slow down. This is achieved by adding an artificial waiting time (20ms, a normal iteration takes about 15ms with 24 cores) after each local iteration of a node, which simulates either the slow computation or slow communication.

Fig. \ref{fig5} shows the speedup of the APPG and the synchronous implementation of APPG in this scenario. It indicates that the synchronous counterpart of APPG has a sharp reduction in convergence rate even when only 1 core slows down. In contrast, APPG still keeps an almost linear speedup. This result is also consistent with that in \cite{lian2018asynchronous,zhang2018asyspa}.

\begin{figure}[!t]
	\centering
	\begin{subfigure}[c]{0.49\linewidth}
		\centering
        \includegraphics[width=\linewidth]{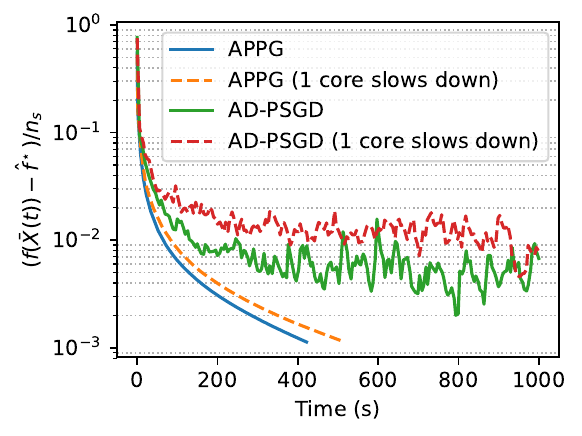}
        \caption{12 cores}\label{fig6a}
	\end{subfigure}
	\begin{subfigure}[c]{0.49\linewidth}
		\centering
        \includegraphics[width=\linewidth]{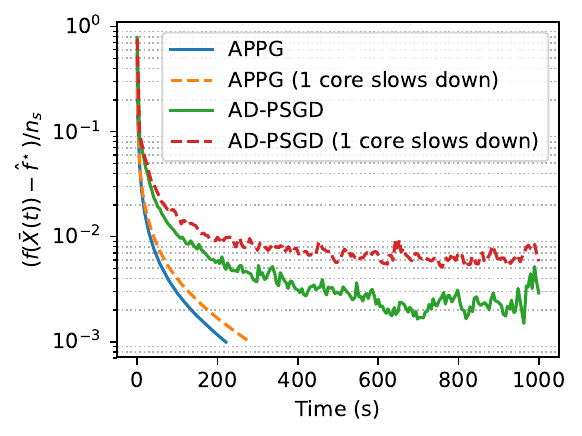}
        \caption{24 cores}\label{fig6b}
	\end{subfigure}
	\caption{Convergence of APPG and AD-PSGD with full local gradient when one node is artificially slowed down by adding 20ms waiting time after each iteration.}
	\label{fig6}
\end{figure}

Introducing the slowing core also brings an easily overlooked problem of asynchronous algorithms, that is, the cores have \emph{uneven} update rates. To show its effect to the performance, we compare the proposed algorithm to a gossip-based asynchronous algorithm AD-PSGD \cite{lian2018asynchronous} with full local gradients. Note that APPG can only work over \emph{undirected} networks, and hence we modify the network for it by adding a reversed edge to each edge in the directed network, while APPG is still implemented over the directed network. Fig. \ref{fig6} shows the result over 12 nodes and 24 nodes, where the AD-PSGD fails to converge to the exact optimum. In contrast, APPG converges exactly despite that the convergence rate is reduced a bit.

\section{Conclusion}\label{sec7}
This paper has proposed a fully asynchronous algorithm (APPG) for distributed optimization. It allows nodes to connect via a directed communication network and update with uncoordinated computation and stale information from neighbors. Linear convergence rate of APPG is achieved for possibly non-convex  Lipschitz-smooth functions satisfying the PL condition. The performance of APPG is also demonstrated via a logistic regression problem. Future works may focus on accelerating APPG and extending it to stochastic optimization settings.

\bibliographystyle{IEEEtran}
\bibliography{mybibf}

\end{document}